\documentclass[%
superscriptaddress,
nofootinbib,
amsmath,
amssymb,
aps,
floatfix,
onecolumn
]{revtex4-2}

\usepackage{amsmath,amssymb,amsfonts}
\usepackage{mathtools}    
\usepackage{bm}
\usepackage{braket}
\usepackage{stmaryrd}
\usepackage{mathrsfs}
\usepackage{derivative}
\usepackage{fixdif}

\usepackage{enumitem}

\usepackage{amsthm}
\theoremstyle{definition}
\newtheorem{thm}{Theorem}
\newtheorem{prop}[thm]{Proposition}
\newtheorem{lem}[thm]{Lemma}
\newtheorem{cor}[thm]{Corollary}

\newcommand{\Z}{\mathbb{Z}}

\newcommand{\ave}[2]{\Braket{#1}_{#2}}
\newcommand{\Acal}{\mathcal{A}}

\newcommand{\Ucal}{\mathcal{U}}

\newcommand{\low}{\mathrm{R}}
\newcommand{\high}{\mathrm{E}}

\newcommand{\macro}{\mathrm{macro}}
\newcommand{\micro}{\mathrm{micro}}

\newcommand{\eq}{\mathrm{eq}}

\DeclareMathOperator{\Tr}{\mathrm{Tr}}

\DeclareMathOperator{\diam}{\mathrm{diam}}

\newcommand{\ceqq}{\coloneqq}

\newcommand{\EAP}{{\mathrm{EAP}}}
\newcommand{\ESEP}{{\mathrm{ESEP}}}

\DeclarePairedDelimiter{\abs}{\lvert}{\rvert}
\DeclarePairedDelimiter{\norm}{\lVert}{\rVert}

\usepackage{booktabs}
\usepackage{graphicx}
\usepackage{tikz}
\usetikzlibrary{arrows.meta,calc,decorations.pathreplacing,positioning}

\definecolor{eqblue}{RGB}{220,235,247}
\definecolor{neqorange}{RGB}{252,232,209}
\definecolor{controlgreen}{RGB}{50,145,105}
\definecolor{infopurple}{RGB}{112,78,170}
\definecolor{spinred}{RGB}{200,70,65}
\definecolor{spinblue}{RGB}{65,105,180}

\usepackage{booktabs}

\usepackage{hyperref}
\hypersetup{
  colorlinks=true,
  citecolor=magenta,
  linkcolor=blue,
  urlcolor=cyan
}

\begin{document}
\title{Work Extraction Across a Thermodynamic Hierarchy in Quantum Many-Body Systems}
\author{Akihiro Hokkyo}
 \email{hokkyo@cat.phys.s.u-tokyo.ac.jp}
 \affiliation{Department of Physics, University of Tokyo, 7-3-1 Hongo, Bunkyo-ku, Tokyo 113-8654, Japan}%
\author{Masahito Ueda}%
\affiliation{%
Department of Physics, University of Tokyo, 7-3-1 Hongo, Bunkyo-ku, Tokyo 113-8654, Japan}
\affiliation{
 Institute for Physics of Intelligence, University of Tokyo, 7-3-1 Hongo, Bunkyo-ku, Tokyo 113-0033, Japan}
 \affiliation{
 Fundamental Quantum Science Program (FQSP), TRIP Headquarters, RIKEN, Wako 351-0198, Japan
}%

\begin{abstract}
Thermodynamics is operational
in the sense that 
the very concept of thermal equilibrium depends crucially on the choice of observables, 
while the amount of extractable work is defined relative to the allowed operations.
Here we show that isolated quantum many-body states admit a thermodynamic hierarchical structure of observables and allowed operations,
where the same state can be thermal at one level of the hierarchy and athermal at another. 
Enlarging the set of allowed operations therefore renders such hidden athermality a potential resource for work extraction; 
the resulting work gain is bounded in terms of the difference between the entropy densities of the two levels.
This entropy difference takes the form of the mutual-information density. 
We establish the bounds for three extensions of operational access: 
increased spatial resolution, increased duration of control, 
and nonlocal connectivity,
which provide access to position-state Holevo information, 
correlations between neighboring regions and spatially nonlocal correlations, respectively. 
Thus the difference between entropies at different levels of the thermodynamic hierarchy governs the bound on the work gain associated with moving to a less restricted level.
\end{abstract}

\maketitle

\section*{Introduction}
Thermodynamics exhibits a remarkable universality that goes far beyond its original formulation, 
ranging from ultracold atomic gases~\cite{hoUniversalThermodynamicsDegenerate2004,nascimbeneExploringThermodynamicsUniversal2010} 
to black holes~\cite{bekensteinBlackHolesEntropy1973,hawkingParticleCreationBlack1975}.
This broad applicability originates from the operational character of thermodynamics;
once an observational resolution and its compatible class of operations are specified, 
thermodynamics enables the prediction of observable values at thermal equilibrium 
and dictates which operations are prohibited by the second law of thermodynamics.

This operational perspective has shaped the study of thermodynamics in isolated quantum many-body systems.
The emergence of irreversible thermalization from reversible quantum mechanics 
can be traced back to von Neumann~\cite{neumannBeweisErgodensatzesUnd1929}, 
and recent experiments have shed light on thermalization through the lens of local observables~\cite{kaufmanQuantumThermalizationEntanglement2016,closTimeResolvedObservationThermalization2016,neillErgodicDynamicsThermalization2016}. 
Planck's principle~\cite{liebPhysicsMathematicsSecond1999}, 
a formulation of the second law stating that a cyclic operation cannot lower the energy 
of a system initially in equilibrium without heat exchange, 
has been introduced into quantum thermodynamics as the concept of passivity~\cite{puszPassiveStatesKMS1978,lenardThermodynamicalProofGibbs1978}.
While the original version permits arbitrary unitary operations 
that are generally inaccessible in many-body systems,
the principle has subsequently been reexamined under more physically feasible protocols~\cite{tasakiSecondLawThermodynamics2000,goldsteinSecondLawThermodynamics2013,kanekoWorkExtractionSingle2019,babaWorkExtractabilityEnergy2023,hokkyoUniversalUpperBound2025,chibaSecondLawThermodynamics2026}.
In particular, a state that is locally indistinguishable from a Gibbs state---a state in \textit{microscopic thermal equilibrium} (MITE)---has been shown to satisfy Planck's principle 
under finite-time short-range operations~\cite{hokkyoUniversalUpperBound2025}. 
Chiba \emph{et al.} subsequently demonstrated that a broader class of \textit{infinite-observable macroscopic thermal-equilibrium} (iMATE) states
also satisfies Planck's principle under macroscopic operations~\cite{chibaSecondLawThermodynamics2026}, 
which constitute a subclass of those considered in Ref.~\cite{hokkyoUniversalUpperBound2025} by imposing coarse spatial resolution. 
Here iMATE states are thermal with respect to all additive observables, though not necessarily all local ones~\cite{chibaSecondLawThermodynamics2026}.
These results show that thermodynamic
behavior can be recovered within quantum mechanics under different levels of observation and control.

Here, we shift our focus from a single level to the interrelationship between different operational levels, 
each governed by the laws of thermodynamics.
The same state can therefore admit different thermodynamic descriptions:
it may be thermal with respect to a restricted class of observables but athermal with respect to an enlarged one.
The above results for Planck's principle~\cite{hokkyoUniversalUpperBound2025,chibaSecondLawThermodynamics2026}
further suggest, but do not quantify, ``a trade-off between the notion of thermal equilibrium and the class of operations consistent with the second law of thermodynamics''~\cite{chibaSecondLawThermodynamics2026}:
a thermal state at one level may allow work extraction at the enlarged-control level.
This situation is reminiscent of Maxwell's demon: if a memory with no thermodynamic cost is available, 
the stored information can be utilized through measurement-based feedback control as a resource for additional work extraction.
Here, however, no measurement or feedback is required during the protocol.
The relevant information is already encoded in the initial state 
but cannot be utilized within the restricted class of operations.
While previous studies have compared restricted operations with arbitrary global unitaries~\cite{alickiEntanglementBoostExtractable2013,
onuma-kaluWorkExtractionUsing2018,
puliyilThermodynamicSignaturesGenuinely2022,
salviaOptimalLocalWork2023,
castellanoParallelErgotropyMaximum2025}, 
work extraction across distinct physically motivated operational levels of control in interacting many-body systems remains largely unexplored.
We therefore investigate how much additional work can be extracted 
when expanding the permissible operations from one thermodynamic level to another, 
and how this surplus work relates to the underlying thermodynamic descriptions of both levels.

This question has recently become experimentally relevant as
programmable quantum many-body platforms provide increasingly flexible classes of control.
Optical-tweezer platforms offer local addressability and programmable interactions~\cite{browaeysManybodyPhysicsIndividually2020,kaufmanQuantumScienceOptical2021}, 
while atom rearrangement and coherent transport enable dynamically reconfigurable, nonlocal connectivity~\cite{endresAtombyatomAssemblyDefectfree2016,barredoAtombyatomAssemblerDefectfree2016,bluvsteinQuantumProcessorBased2022}.
Such highly sophisticated control can make microscopic information and correlations accessible that are unavailable under ordinary operations.
Determining the extent to which work extraction is constrained in an isolated quantum many-body system is thus of both foundational and practical importance.

In this paper, we show that the extractable work within the enlarged class 
does not exceed the restricted-level bound plus a contribution from the difference between the entropy densities assigned to the state at the two levels
(see Fig.~\ref{fig:hierarchy-map}). 
This difference in entropy densities takes the form of a mutual-information density, 
which is inaccessible at the restricted level but can become accessible under the enlarged class of operations.
We identify three major cases for the additional work extraction: 
increasing spatial resolution by moving from iMATE to MITE,
increasing the duration of short-range operations to access correlations beyond the scale of local MITE,
and introducing nonlocal operations that jointly address spatially separated regions.
In the first case, the entropy difference gives the Holevo information~\cite{holevoBoundsQuantityInformation1973} between a classical position label and the corresponding local quantum state.
For longer-time short-range operations, it is given by the total correlation among the smaller cells within a larger region, 
whereas for nonlocal operations, it is given by the total correlation among spatially separated regions.
Our bounds thus lend an operational meaning to the difference between the entropies assigned by two thermodynamic descriptions of the same state 
via work extraction.

\begin{figure}[t]
\centering
\includegraphics[width=\linewidth]{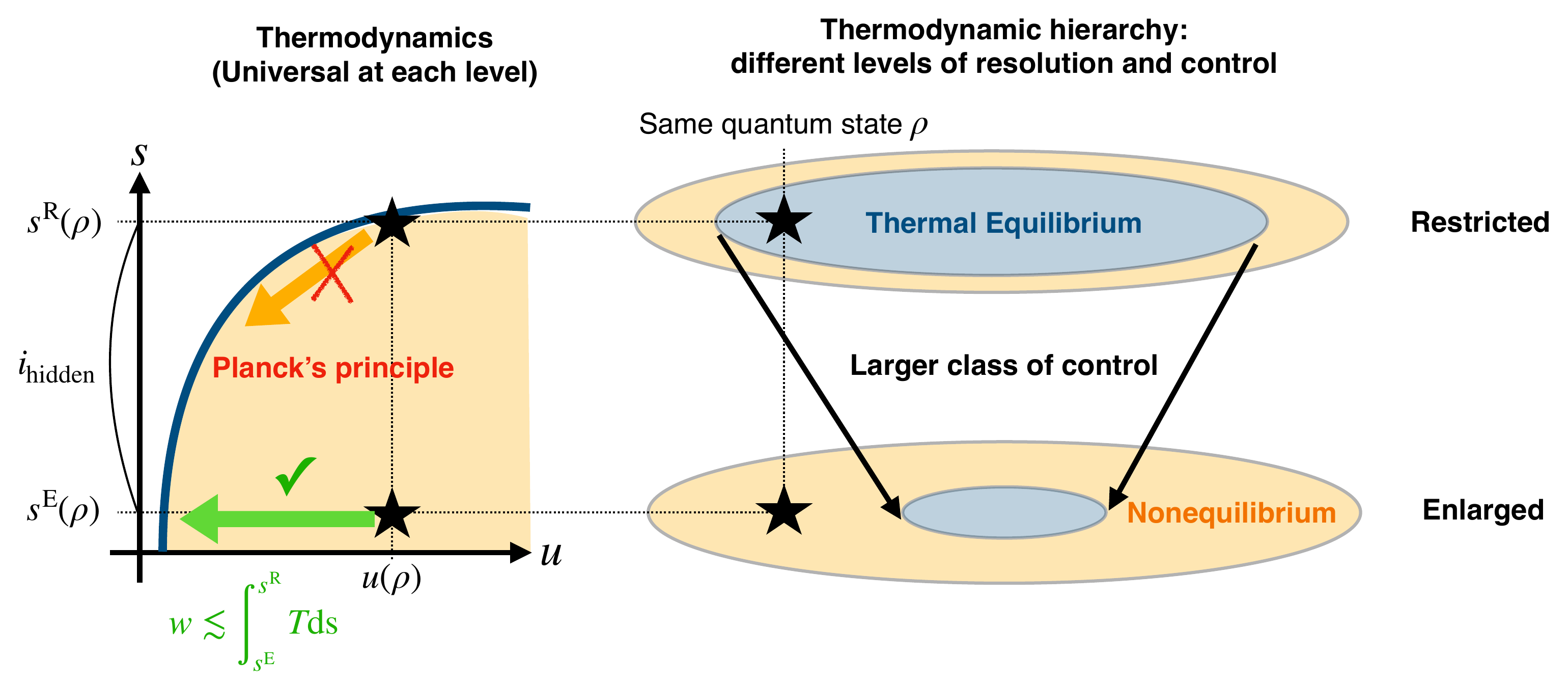}
\caption{\textbf{Schematic of extractable work across thermodynamic levels.}
Consider a quantum state $\rho$ that is in thermal equilibrium 
at a specific thermodynamic level (the ``restricted'' level), 
i.e., with respect to a given observational resolution and a compatible class of operations.
At this level, the entropy $s^{\low}(\rho)$ equals the thermodynamic entropy, 
and Planck's principle prohibits work extraction from this state via an adiabatic cycle.
With a higher observational resolution and a broader class of operations (the ``enlarged'' level), 
the same state $\rho$ may become out of equilibrium.
Consequently, the expanded class of operations may allow work to be extracted from $\rho$.
Any such additional work is bounded in terms of the entropy difference $s^{\low}(\rho)-s^{\high}(\rho)$,
where $s^{\high}(\rho)$ is the entropy at the enlarged-control level.
This difference can be interpreted as the hidden mutual information $i_{\mathrm{hidden}}$ encoded in $\rho$; 
it is inaccessible at the restricted level but can become accessible through the broader control class.
}
\label{fig:hierarchy-map}
\end{figure}

\section*{Thermodynamic hierarchy}
The term ``thermodynamic hierarchy'' is used here in an operational sense. 
Each hierarchical level is defined by its notion of thermal equilibrium and its class of allowed operations,
thereby determining the amount of work extractable from the system.
At each level, 
thermal equilibrium is defined by a class of observables: 
a state is called thermal 
if the expectation values of such observables are indistinguishable from the thermal values at the same energy.
The class of allowed operations is determined by a subset of quantum operations on the system.
Throughout this paper, 
we consider cyclic operations with no heat exchange as the simplest setting in which thermodynamics emerges.
Under this framework, operations are described by unitary evolution governed by cyclic time-dependent Hamiltonians.
The two classes are required to be compatible: 
a thermal state at a level satisfies Planck's principle under the corresponding operations, that is, 
no extensive work can be extracted from it.
Different levels can therefore provide different thermodynamic descriptions of the same state.
When the classes of observables and operations are enlarged,
a state that is thermal at the restricted-control level need not remain so at the enlarged-control level.
Table~\ref{tab:hierarchy} summarizes the three representative hierarchical levels,
which are arranged from top to bottom 
so that the classes of observables and operations become broader.
Each level is associated with an entropy functional that governs an upper bound on work extraction 
and agrees with thermodynamic entropy for states in equilibrium at that level.
The same state may possess different entropies at different levels;
this mismatch can be exploited as a resource for work extraction.

Specifically, 
observables supported on a subsystem $A$ define MITE on region $A$~\cite{goldsteinThermalEquilibriumMacroscopic2015,goldsteinMacroscopicMicroscopicThermal2017,moriThermalizationPrethermalizationIsolated2018}.
MITE describes observed microscopic thermalization~\cite{kaufmanQuantumThermalizationEntanglement2016,closTimeResolvedObservationThermalization2016,neillErgodicDynamicsThermalization2016}.
In particular, MITE on the length scale $l$
is defined by the class of observables supported on regions of size $l$ or smaller.
Planck's principle for this equilibrium notion was established in Ref.~\cite{hokkyoUniversalUpperBound2025},
where the compatible operations are finite-time short-range operations.
Here, 
by a short-range operation,
we mean a unitary evolution under a time-dependent short-range Hamiltonian.
Note that the same proof applies to a longer $l$-dependent time scale (specified later), 
which we refer to as short time on scale $l$.
We can also consider MITE on nonlocal regions,
where some nonlocal operations become compatible; 
see Supplementary Information for details.
The associated entropy is the von Neumann entropy of the reduced state on that region~\cite{hokkyoUniversalUpperBound2025}.

While MITE captures the limit of accessible length scale under the assumption of single-site resolution,
a weaker notion of thermal equilibrium is required for macroscopic thermalization 
where observers have much coarser resolution.
Indeed, Chiba \textit{et al.}~\cite{chibaSecondLawThermodynamics2026}
defined such macroscopic thermal equilibrium\footnote{
Note that this is different from the so-called MATE~\cite{goldsteinThermalEquilibriumMacroscopic2015,tasakiTypicalityThermalEquilibrium2016,goldsteinMacroscopicMicroscopicThermal2017}, 
which is also an abbreviation for macroscopic thermal equilibrium.
MATE typically treats a finite number of observables without specifying them and
considers the full distribution rather than only the expectation value.
}, termed iMATE, 
using a class of observables given by spatial averages of identical local observables over macroscopic subsystems; 
we refer to them as densities of additive macroscopic observables.
For such states, Planck's principle holds for finite-time macroscopic operations~\cite{chibaSecondLawThermodynamics2026},
which roughly\footnote{
Technically speaking, 
the operations considered in Ref.~\cite{hokkyoUniversalUpperBound2025} are generated by one- and two-body control terms, 
whereas macroscopic operations in Ref.~\cite{chibaSecondLawThermodynamics2026} allow any $O(L^0)$-body interactions.
On the other hand, macroscopic operations in Ref.~\cite{chibaSecondLawThermodynamics2026} assume finite-range interactions, 
which is not assumed in Ref.~\cite{hokkyoUniversalUpperBound2025}.
In this paper, we impose neither of these additional restrictions.
} 
constitute a spatially homogeneous subclass of short-range operations.
At this level, the entropy is calculated after taking the spatial average of reduced states~\cite{chibaSecondLawThermodynamics2026}.

\begin{table}[t]
\centering
\small
\setlength{\tabcolsep}{5pt}
\begin{tabular}{@{}p{0.185\linewidth}p{0.26\linewidth}p{0.32\linewidth}p{0.145\linewidth}@{}}
\toprule
Equilibrium notion & Observable class & Compatible operation class  & Entropy of \\
\midrule
iMATE~\cite{chibaSecondLawThermodynamics2026} & additive macroscopic observables & finite-time macroscopic operations~\cite{chibaSecondLawThermodynamics2026} & spatially averaged reduced state $\overline\rho_l$\\
MITE on scale $l$~\cite{goldsteinThermalEquilibriumMacroscopic2015,goldsteinMacroscopicMicroscopicThermal2017,moriThermalizationPrethermalizationIsolated2018} & observables supported within regions of size $l$ & short-range operations for finite time~\cite{hokkyoUniversalUpperBound2025} or for short time on length scale $l$ & reduced state $\rho_l$\\
MITE on enlarged regions $\{A_i\}$& observables on larger regions & longer-time or nonlocal control & reduced state $\rho_A$\\
\bottomrule
\end{tabular}
\caption{\textbf{Operational thermodynamic hierarchy.} 
Each hierarchical level is defined by the corresponding notion of thermal equilibrium that is specified by a set of observables 
accompanied by a compatible class of operations.
Here, ``compatible'' implies that thermal states satisfy Planck's principle.
At each level, thermal equilibrium is also characterized by the corresponding entropy function.
Moving downward through the table narrows the set of thermal states, 
expands the allowed controls, and does not increase entropy.
}
\label{tab:hierarchy}
\end{table}

\section*{Work Extraction Between Thermodynamic Levels}
To demonstrate our key idea about thermodynamic hierarchy, 
we consider an isolated quantum spin system
on a lattice $\Lambda_L\coloneqq\{-\lceil\frac{L}{2}\rceil+1,\dots\lfloor\frac{L}{2}\rfloor\}^D$ 
with a short-range, $O(L^0)$-body and translation-invariant Hamiltonian $H_L$ and initial state $\rho_L$. 
For a class $\Ucal_L$ of allowed operations, the extractable work density is defined
as
\begin{equation}
 w_L(\rho_L;\Ucal_L)=\frac{1}{|\Lambda_L|}
 \left[
 \Tr(H_L\rho_L)-\inf_{U\in\Ucal_L}
 \Tr\!\left(H_LU\rho_LU^\dagger\right)
 \right].
 \label{eq:work}
\end{equation}
A positive value of this quantity $(w_L=\Theta(L^0))$ implies that the system's energy can be extensively lowered by some operation in $\Ucal_L$; 
otherwise, the state satisfies Planck's principle under this class of operations.
We write the energy density of the initial state $\rho$ as $u^{\eq}$, 
which is assumed to satisfy $u^{\eq}\le \Tr(H_L)/(d^{\abs{\Lambda_L}}\abs{\Lambda_L})$, 
so that the corresponding equilibrium temperature is nonnegative.

We consider work extraction between two levels, 
where the restricted-control level is specified by a class $\Ucal_L^{\low}$ of operations and the enlarged-control level by a class $\Ucal_L^{\high}\supset\Ucal_L^{\low}$.
Let $w_L^{\high}(\rho_L)\ceqq w_L(\rho_L;\Ucal_L^{\high})$ be the work density
available after the control class is enlarged,
which trivially satisfies $w_L^{\high}(\rho_L)\ge w_L^{\low}(\rho_L)\ceqq w_L(\rho_L;\Ucal_L^{\low})$. 
By contrast, our result gives 
upper bounds on $w_L^{\high}(\rho_L)$,
which have the common form
\begin{equation}
 w^{\high}(\rho)
 \leq \overline{w}^{\low}(\rho)
 +\underbrace{\int_{s^{\high}=s^{\low}-i_{\rm hidden}}^{s^{\low}}
 T(s)\,\d s\ ,}_{\mbox{potential work gain from the entropy difference}}
 \quad
 i_{\rm hidden}=s^{\low}-s^{\high}\ (\ge0),
 \label{eq:central}
\end{equation}
where we take the thermodynamic limit $L\to\infty$ and omit the subscript $L$.
Here $\overline{w}^{\low}(\rho)\ge w(\rho;\Ucal^{\low})$ is 
an upper bound on the extractable work density at the restricted-control level,
which vanishes asymptotically for equilibrium states at that level;
$T(s)$ is the thermodynamic temperature as a function of entropy density $s$,
which is obtained from the equilibrium energy density $u(s)$;
$i_{\rm hidden}$ is the mutual-information density unavailable for work extraction at the restricted-control level,
yet potentially available at the enlarged-control level.

\begin{figure}[t]
\centering
\includegraphics[width=\linewidth]{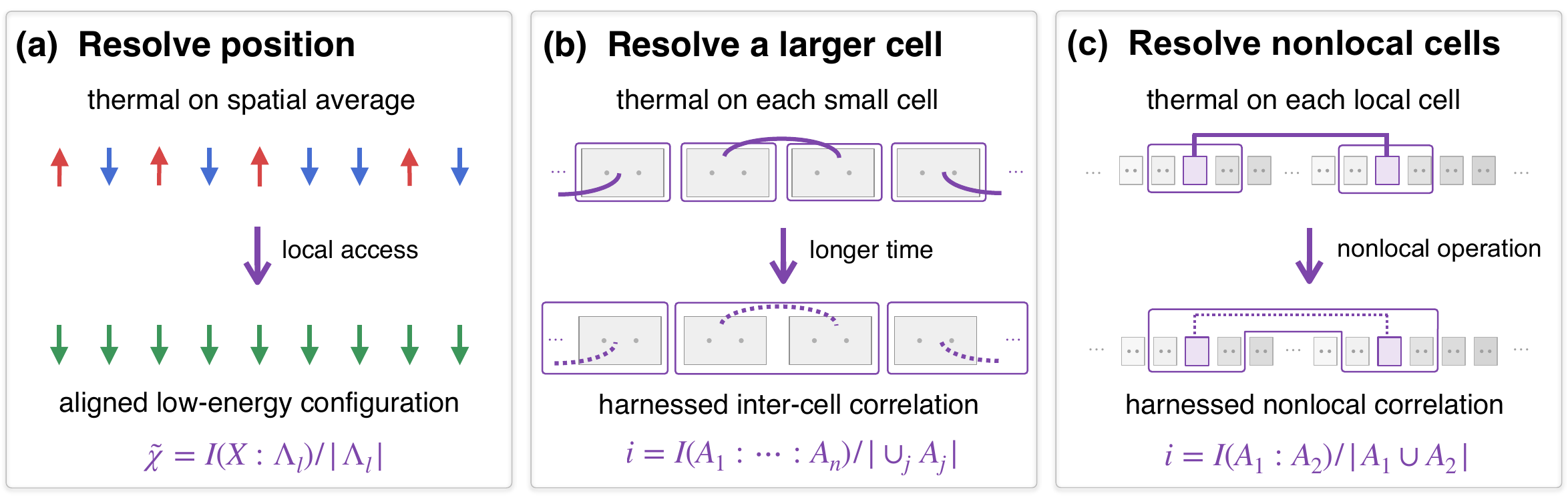}
\caption{\textbf{Three methods to extract work by enlarging the control class.}
(\textbf{a}) A spatially varying configuration can become thermal after positional
averaging, yet local access to microscopic position enables one to utilize the position-state Holevo
information to extract additional work.
(\textbf{b}) Individual small cells appear thermal, although their union need not, owing to inter-cell correlations; 
a longer short-range operation can resolve the larger cell to extract additional work.
(\textbf{c}) Spatially separated cells can each be locally thermal yet jointly
correlated; nonlocal control can resolve their disconnected union as a single
nonlocal cell and access its total correlation to extract additional work.}
\label{fig:three-mechanisms}
\end{figure}

We focus on three concrete cases as summarized in Fig.~\ref{fig:three-mechanisms},
which differ in what the restricted-control level discards: a microscopic position label,
correlations among neighboring regions, or correlations among spatially separated
regions.
To simplify notation, results are presented in the thermodynamic limit $L\to\infty$, 
and the $L$-dependence of certain quantities and operators is omitted.

\subsection{Additional extractable work from microscopic operations}
Let us first consider the work gain from macroscopic thermal equilibrium to MITE.
Both notions are defined for a spatial scale $l(\le L/2)$ as follows.
For every position $x\in\Lambda_L$, 
translate the reduced state around $x$ back to $\Lambda_l$ by a unitary $T_x$
and define the uniform average as 
\begin{equation}
 \rho_{l}^{(x)}=\Tr_{\Lambda_L\setminus \Lambda_l}(T_{x}\rho T_{x}^\dagger),\qquad
 \braket{f_x}_{x\in\Lambda_L}
 =\frac{1}{|\Lambda_L|}\sum_{x\in\Lambda_L}f_x,
 \qquad
 \overline\rho_l=\braket{\rho_l^{(x)}}_{x\in\Lambda_L}.
 \label{eq:posaverage}
\end{equation}
If $\rho_l^{(x)}$ is indistinguishable from an $x$-independent thermal state for every $x$, 
the state is in MITE on the scale $l$.
If only the average $\overline\rho_l$ is thermal, 
the state is in thermal equilibrium only macroscopically;
in the large-$l$ limit, this recovers iMATE for the macroscopically uniform case ($K=1$) in Ref.~\cite{chibaSecondLawThermodynamics2026}.
The corresponding classes of operations are
finite-time short-range operations for MITE 
and macroscopic operations for iMATE.
For simplicity, we only consider spatially homogeneous macroscopic operations, 
which also correspond to the macroscopically uniform case in Ref.~\cite{chibaSecondLawThermodynamics2026}.
This class is contained in a translation-invariant subclass of the short-range operations.

For a state in macroscopic thermal equilibrium, macroscopic control cannot extract extensive work~\cite{chibaSecondLawThermodynamics2026}. 
However, spatially inhomogeneous operations can extract work from the microscopic information embedded in these spatially varying local states $\rho_l^{(x)}$ 
(Fig.~\ref{fig:three-mechanisms} (a)).
Specifically, for the classical--quantum state
$\omega_{X\Lambda_l}=|\Lambda_L|^{-1}\sum_x\ket{x}\!\bra{x}
\otimes\rho_l^{(x)}$, spatial averaging discards precisely the Holevo information density:
\begin{equation}
 \tilde{\chi}_l:=\frac{I(X:\Lambda_l)_\omega}{|\Lambda_l|}
 =\overline s_l
 -s_l,\qquad
 \overline s_l=s_{\Lambda_l}(\overline\rho_l),\qquad
 s_l=\Braket{s_{\Lambda_l}(\rho_l^{(x)})}_{x\in\Lambda_L},
 \qquad s_{\Lambda_l}(\sigma_l)=\frac{S(\sigma_l)}{|\Lambda_l|}.
 \label{eq:holevo-main}
\end{equation}

\begin{thm}[Hierarchical work bound: spatial resolution, informal]\label{thm:spatial-resolution}
Let $w^{\macro}(\rho)$ and $w^{\micro}(\rho)$ denote the extractable work density 
under finite-time macroscopic and short-range operations, respectively.
We then obtain the following bounds for $1\ll l\ll L$:
\begin{align}
 w^{\macro}(\rho)&\le
 \overline{w}^{\macro}(\rho)\coloneqq u^{\eq}-u(\overline s_l),\\
 w^{\micro}(\rho)&\le
 \overline{w}^{\macro}(\rho)
 +\int_{s_l=\overline s_l-\tilde{\chi}_l}^{\overline s_l}
 T(s)\,\d s.
 \label{eq:imate_to_MITE}
\end{align}
\end{thm}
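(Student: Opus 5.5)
The plan is to derive both inequalities from a single passivity-type argument. The final energy is lower-bounded by the equilibrium energy at an entropy density that does not decrease under the allowed operations. The preserved entropy density is $\overline s_l$ for macroscopic operations and $s_l$ for short-range operations.

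\textbf{Macroscopic bound.} Let $U$ be generated by a finite-time, spatially homogeneous short-range Hamiltonian. By the Lieb--Robinson bound, $U$ acts on each block of size $l$ as a quasi-local channel that is the same at every position $x$, up to boundary corrections of relative order $|\partial\Lambda_l|/|\Lambda_l|$, which vanish for $1\ll l\ll L$. Because the channel is independent of $x$, averaging over positions commutes with it.

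We then use two standard facts. First, $S(\overline\sigma_l)$ of the final averaged reduced state is, up to boundary terms, at least the entropy density produced from the initial $\overline\rho_l$. This follows the argument of Ref.~\cite{chibaSecondLawThermodynamics2026}, and I would assume that result. Second, by the Gibbs variational principle applied to a block Hamiltonian $H_{\Lambda_l}$ together with subadditivity, the energy density $\Tr(H_L\sigma)/|\Lambda_L|$ of any state whose averaged block entropy density is $\ge s$ satisfies
\[
\frac{\Tr(H_L\sigma)}{|\Lambda_L|}\ge u(s)-o(1).
\]
Since $u$ is increasing in $s$ in the nonnegative-temperature regime, these give a final energy density $\ge u(\overline s_l)$ and hence $w^{\macro}\le u^{\eq}-u(\overline s_l)$.

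\textbf{Microscopic bound.} The same argument, now applied without averaging, is the content of Ref.~\cite{hokkyoUniversalUpperBound2025}. For finite-time short-range $U$ and each position $x$, the final block entropies satisfy
\[
\langle s_{\Lambda_l}(\sigma_l^{(x)})\rangle_x \ge s_l-o(1),
\]
because $U$ maps each block into a slightly enlarged block and entropy can only be redistributed across boundaries at a cost of order surface$/$volume. Convexity of $u$ together with the variational bound then yields $w^{\micro}\le u^{\eq}-u(s_l)$. Finally, we decompose
\[
u^{\eq}-u(s_l)=\bigl[u^{\eq}-u(\overline s_l)\bigr]+\bigl[u(\overline s_l)-u(s_l)\bigr]
\]
and write the second bracket as $\int_{s_l}^{\overline s_l}T(s)\,\d s$, using $\d u/\d s=T$. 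Substituting $s_l=\overline s_l-\tilde\chi_l$ from Eq.~\eqref{eq:holevo-main} gives Eq.~\eqref{eq:imate_to_MITE}.

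\textbf{Main obstacle.} The hard step is the entropy non-decrease for the block-reduced states under finite-time short-range dynamics, uniformly in $x$ and with errors that vanish as $l\to\infty$ after $L\to\infty$. Lieb--Robinson tails make the induced block channels only approximately local. I would first try the route of Ref.~\cite{hokkyoUniversalUpperBound2025}: embed $U$ into a product of local unitaries on overlapping enlarged blocks, and control the entropy of the boundary layers by Fannes--Audenaert continuity with dimension bounds $\propto|\partial\Lambda_l|$. A secondary point is identifying the block variational bound with the thermodynamic function $u(s)$ in the $l\to\infty$ limit. That identification uses the existence of the free-energy density for translation-invariant short-range Hamiltonians.
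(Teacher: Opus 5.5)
Your argument is correct and follows the same skeleton as the paper's proof (Proposition~\ref{prop:supp-single-level} plus Theorem~\ref{thm:supp-spatial-resolution}). That skeleton is: a Gibbs-variational lower bound on block energies averaged over block offsets; approximate conservation of $l$-block entropies under the allowed dynamics; Jensen via convexity of $u$; and the split $u^{\eq}-u(s_l)=[u^{\eq}-u(\overline s_l)]+\int_{s_l}^{\overline s_l}T(s)\,\d s$. Where you differ is in the technical inputs, and in one place you assume something you do not need to.

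\emph{Entropy-conservation step.} For the step you flag as the main obstacle, the paper uses neither Lieb--Robinson bounds, circuit embeddings, nor Fannes--Audenaert. It applies the small-incremental-entangling theorem term by term, $\bigl|\frac{\d}{\d t}S_A\bigr|\le c_{\mathrm{SIE}}\ln d\sum_X\mu_A(X)\|\Psi_{L,t}(X)\|$. Each block entropy density then moves, in either direction, by at most $c_{\mathrm{SIE}}\ln d$ times the boundary action per volume, which is $O(E_0'T\,\bar r_{F'}(l))$. This requires only a summable coupling profile and holds for times up to $o(\bar r_{F'}(l)^{-1})$. Your route instead needs quasi-locality tails small enough to beat the $l^{D-1}$ boundary area in a trace-norm estimate. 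That is easy for finite-range couplings but delicate for slowly decaying ones.

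\emph{Macroscopic bound.} You do not need to import Ref.~\cite{chibaSecondLawThermodynamics2026}. Your remark that averaging commutes with a homogeneous $U$ is exactly the paper's twirling step, $w_L(\rho_L;\Ucal^{\mathrm{hom}}_L(\epsilon;l))=w_L(\mathcal T_L(\rho_L);\Ucal^{\mathrm{hom}}_L(\epsilon;l))\le w_L(\mathcal T_L(\rho_L);\Ucal_L(\epsilon;l))$. Every $l$-cube of $\mathcal T_L(\rho_L)$ has entropy density $\overline s_l$, so the macroscopic bound is simply your microscopic bound applied to the twirled state.

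\emph{Continuity of $u$.} Your ``$u(s)-o(1)$'' tacitly converts an entropy error into an energy error. Because $T(s)\to\infty$ as $s\to\ln d$, the paper proves a uniform H\"older-$\tfrac12$ bound on $u_{L,A}$, from a high-temperature energy-variance estimate, to obtain explicit $O(\sqrt\epsilon)$ rates. For the qualitative statement, it suffices to combine uniform continuity of the limiting convex $u$ on $[0,\ln d]$ with $\sup_s|u_{L,Q_{L,l}(z)}(s)-u(s)|\le C E_0\bar r_F(l)$. You should state this explicitly.
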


For a state in macroscopic thermal equilibrium, 
$\overline s_l$ represents the thermodynamic entropy in the large-$l$ limit~\cite{chibaSecondLawThermodynamics2026} 
and $\overline{w}^{\macro}=0$.
Meanwhile, the Holevo information density $\tilde{\chi}$ may remain nonzero, 
providing a potential resource for additional work extraction by short-range operations---specifically,
the second term on the right-hand side of Eq.~\eqref{eq:imate_to_MITE}.

A typical example of such work extraction is illustrated in Fig.~\ref{fig:three-mechanisms} (a).
Consider a ferromagnetic spin-$1/2$ system, 
and take a product state as the initial state.
If the direction of each local state is random and $l$ is sufficiently small compared to $L$, 
the state appears thermal on spatial average,
rendering macroscopic operations incapable of extracting extensive work.
This is captured by $\bar{s}_l$, which is close to the maximal value $\ln 2$ for a typical configuration.
On the other hand, we can align each spin via short-range operations,
thereby bringing the state close to the ground state.
In this case, the bound~\eqref{eq:imate_to_MITE} is nearly saturated
with $\tilde{\chi}_l=\ln2$.

\subsection{Additional extractable work from longer-time operations}\label{sec:long-time}
Let us next consider the work gain across different spatial scales $l\ll\hat{l}$ of MITE.
For simplicity, we assume that the interactions of the controlled Hamiltonian $H(t)$ decay faster than $r^{-(D+1+\eta)}\ (\eta>0)$
with respect to the distance $r$;
see Supplementary Information for the general short-range case.
Extending the argument in Ref.~\cite{hokkyoUniversalUpperBound2025} establishes that,
under short-range operations lasting for a short time on scale $l$, 
a state in MITE on scale $l$ continues to satisfy Planck's principle.
Here, a ``short time'' implies that the operation cannot propagate the bulk energy in $\Lambda_l$,
or more precisely $t=o(l/E'_0)$,
where $E'_0$ denotes the scale of the energy density of the controlled Hamiltonian. 
Conversely, increasing $t$ to a time of order $l/E'_0$ or longer can render correlations beyond $l$ accessible for work extraction
(Fig.~\ref{fig:three-mechanisms} (b)). 

To simplify the analysis, we assume that the entropy density of $\rho$ is position independent at each scale, 
and we define $s_l=S(\rho_{\Lambda_l})/|\Lambda_l|$ and
$s_{\hat{l}}=S(\rho_{\Lambda_{\hat{l}}})/|\Lambda_{\hat{l}}|$.
Assuming further that $\hat{l}$ is a multiple of $l$,
the difference in entropy density $i(l\to \hat{l})=s_l-s_{\hat{l}}$ represents the total correlation density among the small cells.
Under these conditions, we establish the following theorem. 
\begin{thm}[Hierarchical work bound: operation time, informal]\label{thm:operation-time}
Let $w^{l}(\rho)$ and $w^{\hat{l}}(\rho)$ denote the extractable work density under short-range operations lasting $t=o(l/E'_0)$ and $t=o(\hat{l}/E'_0)$, respectively.
Then we have the following bounds for $1\ll l\ll \hat{l}\le L/2$:
\begin{align}
 w^{l}(\rho)&\le
 \overline{w}^{l}(\rho)\coloneqq u^{\eq}-u(s_l),\\
 w^{\hat{l}}(\rho)&\le
 \overline{w}^{l}(\rho)
 +\int_{s_{\hat{l}}=s_l-i(l\to \hat{l})}^{s_l}T(s)\,\d s.
 \label{eq:MITE_to_MITE}
\end{align}
\end{thm}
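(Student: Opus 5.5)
The plan is to reduce both inequalities to a single mechanism, the one behind the finite-time bound of Ref.~\cite{hokkyoUniversalUpperBound2025}: a short-range operation running for a time short on scale $l$ (resp.\ $\hat l$) acts, up to errors vanishing as $l\to\infty$, as a product of unitaries on blocks of size $l$ (resp.\ $\hat l$). Energy can then only be redistributed inside these blocks, and the block entropy cannot decrease below the initial one. In a second step we combine this with the variational (Gibbs) principle and concavity of $s\mapsto u(s)$ to obtain the two bounds. The second inequality will then follow from the first applied at scale $\hat l$, together with the integral identity $u(s_l)-u(s_{\hat l})=\int_{s_{\hat l}}^{s_l}T(s)\,\d s$, where $T=\d u/\d s$.

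\textbf{Step 1 (localization).} Tile $\Lambda_L$ into cubes $\{C_j\}$ of side $l$. Split $H(t)$ into intra-block terms and boundary terms, the latter supported within distance $r_0$ of the block boundaries, with $l\gg r_0\gg1$. Using a Lieb--Robinson bound for interactions decaying faster than $r^{-(D+1+\eta)}$, we approximate $U$ by $V\,\bigotimes_j U_{C_j}$. Here $V$ is generated by the boundary terms, which are supported on a volume fraction $O(r_0/l)$, and the error in the energy density is $O(E'_0t/l)+O(r_0/l)+o(1)$. This is where $t=o(l/E'_0)$ enters: $V$ can move only energy proportional to $|\partial C_j|\,E'_0t$ per block, which is $o(|C_j|)$. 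Equivalently, we follow the original argument: the work equals $-\int_0^t\Tr(\dot H(t')\rho(t'))\d t'$, and up to boundary flux this is controlled by the block-local evolutions. We also average the block origin over the $l^D$ shifts, so the estimate is insensitive to where the tiling is placed.

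\textbf{Step 2 (entropy--energy bound).} Since each $U_{C_j}$ preserves $S(\rho_{C_j})=|\Lambda_l|s_l$, the final state $\rho'$ satisfies $\sum_j S(\rho'_{C_j})\ge (|\Lambda_L|/|\Lambda_l|)\,S(\rho_{\Lambda_l})$ up to $o(|\Lambda_L|)$. Next we use the block Hamiltonian $H_{C_j}$ (which differs from $H_L$ by boundary terms of relative size $O(1/l)$), subadditivity, and the Gibbs variational principle $\Tr(H_C\sigma)\ge F_C(\beta)+\beta^{-1}S(\sigma)$, optimized over $\beta\ge0$. Together these give $\Tr(H_L\rho')/|\Lambda_L|\ge u(\langle s(\rho'_{C_j})\rangle_j)-o(1)\ge u(s_l)-o(1)$; the first inequality uses convexity and monotonicity of $u$ on $u\le$ infinite-temperature energy, and the second uses block-entropy conservation. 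Combined with $\Tr(H_L\rho)/|\Lambda_L|=u^{\eq}$, this yields $w^l\le u^{\eq}-u(s_l)=\overline w^{l}$. The assumption that $s$ is position independent removes the need to average $s$ over blocks.

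\textbf{Step 3 (second inequality).} Repeating Steps 1--2 with $\hat l$ in place of $l$ gives $w^{\hat l}\le u^{\eq}-u(s_{\hat l})=[u^{\eq}-u(s_l)]+[u(s_l)-u(s_{\hat l})]$, and the last bracket equals $\int_{s_l-i(l\to\hat l)}^{s_l}T(s)\,\d s$. The requirement that $\hat l$ be a multiple of $l$ makes $i(l\to\hat l)\ge0$ a genuine total correlation, by subadditivity over the $l$-cells inside $\Lambda_{\hat l}$, so the integral has the correct orientation.

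The main obstacle is Step 1 at the longer time scale. For times that grow with $l$, the dressing unitary $V$ is no longer a finite-depth boundary operation. We must show that the energy it transports across block boundaries is still $o(|\Lambda_L|)$ whenever $t=o(l/E'_0)$, and that its entropy change is also sublinear. The first approach to try is a Lieb--Robinson light-cone bound: $V$ is supported on a boundary shell of width $v_{\rm LR}t+r_0=o(l)$, so any state change it causes (energy or entropy) is confined to a vanishing volume fraction. The faster-than-$r^{-(D+1+\eta)}$ decay is assumed precisely so that a linear light cone is available.
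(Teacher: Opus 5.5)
Your Steps 2 and 3 follow the paper's route. The paper also bounds the final energy from below with the Gibbs variational principle, $\Tr(H_{L,A}\sigma)\ge|A|\,u_{L,A}(s_A(\sigma))$, summed over a block tiling, followed by Jensen's inequality, monotonicity of $u$, and averaging over tiling offsets. It then obtains the second bound by applying the single-level bound at scale $\hat l$ and splitting $u^{\eq}-u(s_{\hat l})=[u^{\eq}-u(s_l)]+\int_{s_{\hat l}}^{s_l}T(s)\,\d s$. Jensen here requires $u$ to be \emph{convex}, not concave as your opening paragraph says. Using only uniform continuity of the limit $u$, instead of the paper's $\sqrt{\cdot}$-H\"older estimate, is acceptable for the informal statement.

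The genuine gap is Step 1, which you flag as the main obstacle; the tool you propose fails under the stated hypothesis. Decay faster than $r^{-(D+1+\eta)}$ does \emph{not} give a linear Lieb--Robinson light cone. For general states a linear light cone is known only for exponents above $2D+1$. For exponents between $D$ and $2D+1$ there are explicit state-transfer protocols, with pair couplings of strength $r^{-\alpha}$, that signal over distance $r$ in time $o(r)$. So for $t=o(l/E_0')$ your dressing unitary $V$ need not be confined to a shell of width $v_{\mathrm{LR}}t+r_0=o(l)$; it can reach across whole blocks.

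The missing idea is that $U$ never needs to be localized. Step 2 uses only near-conservation of block entropies, and this follows directly from the small incremental entangling theorem:
$\bigl|\frac{\d}{\d t}S(\rho_A(t))\bigr|\le c_{\mathrm{SIE}}\ln d\sum_X\mu_A(X)\,\|\Psi_t(X)\|$, where $\mu_A(X)=\min\{|X\cap A|,|X\cap A^c|\}$. Integrating over time and using $\sum_{x\in A}\sum_{y\notin A}F'(d_L(x,y))\le C_D|A|\,\bar r_{F'}(l)$ gives $|s_A(\rho)-s_A(U\rho U^\dagger)|\le C\ln d\,E_0'\,t\,\bar r_{F'}(l)$. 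The $r^{-(D+1+\eta)}$ assumption serves only to give $F'$ a finite first moment, so that $\bar r_{F'}(l)=O(1/l)$ and $t=o(l/E_0')$ makes this $o(1)$; no light cone enters.

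Your energy-flux claims about $V$ can then be dropped. They are not justified: $V$'s generator is the boundary terms conjugated by the bulk evolution. They are also unnecessary, since the energy bound comes entirely from the entropy argument in Step 2.
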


The latter inequality is saturated by the following toy model, 
which is considered in Ref.~\cite{chibaSecondLawThermodynamics2026}. 
On a spin-$1/2$ ring with an even number of sites, 
we choose an odd integer $l$ satisfying $2<l\le L/2$ and 
place singlet Bell pairs $\ket{\psi^-}$ on sites
$k$ and $k+l$ (modulo $L$) for every odd $k$. 
This yields a tensor product of entangled pairs of spins separated by $l$:
\begin{equation}
 \ket{\ESEP_l}=\bigotimes_{k:\,\mathrm{odd}}
 \ket{\psi^-}_{k,k+l},
 \label{eq:eap}
\end{equation}
which we call the equally-separated entangled-pair (ESEP) state;
see Fig.~\ref{fig:eap-structure} for a schematic illustration.
Every interval containing at most $l$ sites is maximally mixed $(s_l=\ln 2)$, 
whereas the entropy density of intervals of length $\hat{l}\gg l$ tends to zero $(s_{\hat{l}}=O(l/\hat{l}))$. 
Consequently, the state is indistinguishable from the infinite-temperature state below the pairing scale $l$, 
yet it contains a resource for work extraction at the scale $\hat{l}$,
which is quantified by the positive difference $s_l-s_{\hat{l}}=i(l\to\hat{l})=\ln 2-O(l/\hat{l})$.

\begin{figure}[t]
\centering
\includegraphics[width=\linewidth]{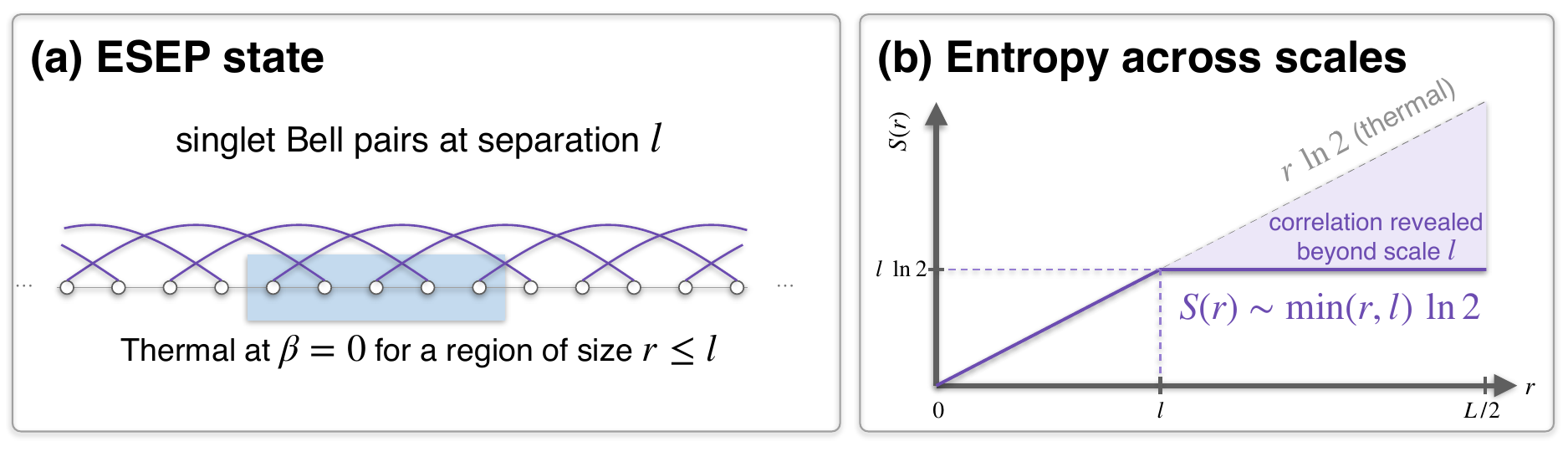}
\caption{\textbf{Equally-separated entangled-pair (ESEP) state and its entropy landscape across varying scales.}
(\textbf{a}) A segment of $\ket{\ESEP_l}$: every site belongs to a singlet Bell
pair whose endpoints are separated by $l$ sites.  
Any contiguous interval of length $r\leq l$ contains no complete pair and is therefore
maximally mixed, appearing thermal at $\beta=0$.
(\textbf{b}) The entropy of a region of length $r$ follows the thermal value
$r\ln2$ up to the pairing scale $l$ and equals $(l+O(1))\ln2$ for
$l\le r\leq L/2$.  
The nonlocal pair correlations give rise to the entropy difference shown in the shaded region.}
\label{fig:eap-structure}
\end{figure}

To calculate a concrete value of work,
we take the frustrated Heisenberg chain for the system Hamiltonian:
\begin{equation}
 H(J_1,J_2)=J_1\sum_j\bm S_j\!\cdot\!\bm S_{j+1}
            +J_2\sum_j\bm S_j\!\cdot\!\bm S_{j+2},
 \qquad J_1,J_2>0.
 \label{eq:j1j2}
\end{equation}
For $l>2$, the state $\ket{\ESEP_l}$ possesses an energy equal to that of the infinite-temperature state, 
which is taken to be zero.
However, a macroscopic operation acting over a timescale of $O(l)/J_1$
maps $\ket{\ESEP_l}$ onto a nearest-neighbor dimer state~\cite{chibaSecondLawThermodynamics2026},
thereby extracting a work density of $3J_1/8$. 
At the Majumdar-Ghosh point $J_2=J_1/2$~\cite{majumdarNextNearestNeighborInteractionLinear1969}, 
this target dimer state becomes the ground state, 
leading to the full saturation of the hierarchy bound~\eqref{eq:MITE_to_MITE} in the limit of $l/\hat{l}\to0$:
\begin{equation}
 w^{\hat{l}}=\int_{0}^{\ln 2} T(s)\, \d s =u(\ln2)-u(0)=\frac{3J_1}{8}.
 \label{eq:saturation}
\end{equation}

\subsection{Additional extractable work from nonlocal operations}
Let us finally consider the work gain from local MITE to nonlocal MITE.
Let $\Acal_l$ be a partition into
connected cells $A$ of linear size $l$.
We take a nonlocal coarser partition
$\widehat{\Acal}_{l}$ whose elements $\hat A$ are unions
\begin{equation}
 \hat A=A_1\cup\cdots\cup A_m,
 \qquad A_\nu\in\Acal_l,
 \label{eq:nonlocal-cluster}
\end{equation}
where $A_i\subset\hat{A}$ may be spatially separated from one another.
We call a state MITE on $\widehat{\Acal}_{l}$ 
if the joint reduced state on every $\hat{A}$ is thermal, 
rather than merely requiring each $\rho_{A_\nu}$ to be thermal.

Short-range operations restricted to short timescales on scale $l$ treat the components of
$\hat A$ as separate thermodynamic subsystems,
thereby failing to extract extensive work from MITE states on scale $l$.
Conversely, nonlocal correlations within each $\hat{A}$ can become accessible for work extraction
if interactions within each $\hat A$ are permitted.
Such nonlocal control can be implemented,
for instance, through direct long-range couplings or by rearranging distant cells
prior to their interaction.  
Accordingly, we define a class of operations compatible with $\widehat{\Acal}_{l}$
via short-time unitary evolutions governed by Hamiltonians that feature interactions 
within each large block $\hat{A}$ of $\widehat{\Acal}_{l}$ 
in addition to those used in short-range operations.

For simplicity, 
we assume that 
the entropy density of $\rho$ for scale $l$ does not depend on position, 
and define $s_l$ as in Sec.~\ref{sec:long-time}.
The key resource for work extraction is the total correlation among the separated cells,
\begin{equation}
 i(\Acal_l\to\widehat{\Acal}_{l})
 =\frac{1}{L^D}\sum_{\hat{A}\in\widehat{\Acal}_{l}}I(A_1:\cdots:A_m)_{\rho_{\hat{A}}}
 =s_l-s(\widehat{\Acal}_{l}),\qquad
 s(\widehat{\Acal}_{l})=\frac{1}{L^D}\sum_{\hat{A}\in\widehat{\Acal}_{l}}S(\rho_{\hat A}),
 \label{eq:ilr}
\end{equation}
where $I(A_1:\cdots:A_m)_{\rho_{\hat{A}}}=D(\rho_{\hat{A}}\|\otimes_{i=1}^m\rho_{A_i})$ denotes the multipartite total correlation.
This is the resource depicted in Fig.~\ref{fig:three-mechanisms} (c).

\begin{thm}[Hierarchical work bound: control locality, informal]
\label{thm:control-locality}
Let $\bar{w}^{l}(\rho)$ be the same as in Thm.~\ref{thm:operation-time}, 
and let $w^{\rm long}(\rho)$ denote the extractable work density under
operations compatible with $\widehat{\Acal}_{l}$.
Then we obtain
\begin{align}
 w^{\rm long}(\rho)&\leq
 \overline w^{l}(\rho)
 +\int_{s(\widehat{\Acal}_{l})=s_l-i(\Acal_l\to\widehat{\Acal}_{l})}^{s_l}T(s)\,\d s.
 \label{eq:long-range-bound}
\end{align}
\end{thm}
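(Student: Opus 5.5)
The plan is to reduce Theorem~\ref{thm:control-locality} to the same two-step mechanism that underlies Theorems~\ref{thm:spatial-resolution} and~\ref{thm:operation-time}. The steps are: a passivity (Gibbs-variational) bound at the enlarged level, followed by an entropy-accounting identity that rewrites the enlarged-level entropy density as the restricted-level one minus the hidden correlation.

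\textbf{Step 1: a Planck-type bound at the enlarged level.} Take the coarse partition $\widehat{\Acal}_l$. Let $H_{\widehat{\Acal}}$ denote the system Hamiltonian $H$ with all terms crossing the boundary of some block removed. Here the boundary is taken with respect to the enlarged connectivity, so that each $\hat A$ is treated as a single cell. Apply the Lieb--Robinson-type argument of Ref.~\cite{hokkyoUniversalUpperBound2025}, as extended for Theorem~\ref{thm:operation-time}, to operations compatible with $\widehat{\Acal}_l$. Since the nonlocal couplings act only inside blocks and the short-range part runs only for time $o(l/E'_0)$, the work that can flow across block boundaries is subextensive: it is controlled by (number of blocks)$\times$(boundary size of components $A_\nu$)$\times t E'_0$, which is $o(L^D)$.

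Energy can then be lowered only within each $\hat A$. The minimal energy reachable from $\rho_{\hat A}$ by a unitary on $\hat A$ is bounded below by the free-energy argument: $\Tr(H_{\hat A}\sigma)\ge \Tr(H_{\hat A}\gamma_\beta)+\beta^{-1}[S(\gamma_\beta)-S(\sigma)]$ for every $\beta\ge0$, with $S(\sigma)=S(\rho_{\hat A})$ unchanged. Summing over blocks, optimizing $\beta$, and using equivalence of ensembles together with the fact that the blocks' Hamiltonian approximates the local energy density up to boundary terms, I expect to obtain
\begin{equation*}
w^{\rm long}(\rho)\le u^{\eq}-u\bigl(s(\widehat{\Acal}_l)\bigr)+o(1).
\end{equation*}

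\textbf{Step 2: entropy accounting.} The identity~\eqref{eq:ilr} gives $s(\widehat{\Acal}_l)=s_l-i(\Acal_l\to\widehat{\Acal}_l)$. This follows from $I(A_1:\cdots:A_m)=\sum_\nu S(\rho_{A_\nu})-S(\rho_{\hat A})$ together with the position independence of $s_l$. Then write
\begin{equation*}
u^{\eq}-u\bigl(s(\widehat{\Acal}_l)\bigr)=\bigl[u^{\eq}-u(s_l)\bigr]+\bigl[u(s_l)-u(s_l-i)\bigr],
\end{equation*}
and use $\d u/\d s=T(s)$, which holds on the branch of nonnegative temperature guaranteed by the assumption on $u^{\eq}$. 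The first bracket is $\overline w^l(\rho)$ and the second is the integral in~\eqref{eq:long-range-bound}.

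\textbf{Main obstacle.} The difficult part is Step~1. Each $\hat A$ consists of spatially disconnected pieces, so both its true energy and its boundary are those of $m$ separate cells. I must show that the short-range part of the dynamics cannot couple different blocks extensively within time $o(l/E'_0)$. This requires $m$ to stay $O(1)$, or at least requires boundary effects to be $o(|A_\nu|)$ per component.

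I must also check that the Gibbs state of a disjoint union of cells has entropy density still given by the same $u(s)$ relation. I would handle this by noting that $H_{\hat A}$ restricted to separated cells is $\sum_\nu H_{A_\nu}$ up to boundary terms. Its Gibbs state is then a product state, so $u(s)$ for $\hat A$ coincides with that of the cells and hence with the bulk $u(s)$ as $l\to\infty$.

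Concavity of $u^{-1}$ (equivalently, convexity of $u(s)$) is needed when averaging over blocks with possibly different entropies. Applying Jensen's inequality to the sum of per-block bounds gives the stated bound in terms of the average $s(\widehat{\Acal}_l)$.
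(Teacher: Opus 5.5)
\paragraph{Verdict: same overall architecture, but Step~1 has a gap in its mechanism.}
The paper proves this statement by combining the coarse-family single-level bound, Eq.~\eqref{eq:supp-coarse-single-level}, with the entropy accounting of Theorem~\ref{thm:supp-control-locality}. Your treatment of the union Hamiltonian (approximately $\sum_\nu H_{A_\nu}$, product Gibbs state, hence $u_{L,\hat A}\approx u$) is exactly Eq.~\eqref{eq:supp-union-u-limit}.

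\paragraph{The gap in Step~1.}
A compatible protocol is not a product of unitaries on the blocks $\hat A$. Short-range terms crossing $\partial\hat A$ act throughout the protocol, so $S(\rho_{\hat A})$ is not conserved. A decrease of these block entropies, i.e.\ consumption of correlations across $\partial\hat A$, is precisely what would allow extra work. A bound on the energy that can flow across the boundaries does not control this. Hence your step ``energy can then be lowered only within each $\hat A$ \dots\ by a unitary on $\hat A$ with $S(\sigma)=S(\rho_{\hat A})$'' does not follow.

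Three ingredients are needed:
\begin{enumerate}
\item The small-incremental-entangling rate bound across the cut $\hat A|\hat A^c$, Eq.~\eqref{eq:supp-SIE-rate}. Integrated against the compatibility condition~\eqref{eq:supp-nonlocal-compatibility}, it gives $|s_{\hat A}(U\rho U^\dagger)-s_{\hat A}(\rho)|\le c_{\mathrm{SIE}}\ln(d)\,\epsilon$.
\item A continuity estimate for $u_{L,\hat A}$ that is uniform over arbitrary, possibly disconnected, $\hat A$ (Proposition~\ref{prop:supp-uniform-holder}). This converts the entropy change into an $O(E_0\sqrt\epsilon)$ energy error.
\item Only then does the Gibbs-variational bound $\Tr(H_{L,\hat A}\sigma_{\hat A})\ge|\hat A|\,u_{L,\hat A}(s_{\hat A}(\sigma))$ close Step~1.
\end{enumerate}
The quantity you estimate (boundary size times $tE_0'$) is the right one: it is the boundary action $\mathfrak B_{\Psi,L}(\hat A)$. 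It must be fed into an entanglement-rate bound, not an energy-flow bound.

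Your worry that $m$ must stay $O(1)$ is unnecessary. Terms supported inside $\hat A$ have $\mu_{\hat A}(X)=0$. For the short-range terms, $\mu_{A\cup B}(X)\le\mu_A(X)+\mu_B(X)$ for disjoint $A,B$. So $\mathfrak B_{\Psi,L}(\hat A)/|\hat A|$ is at most the largest per-component ratio, for any $m$.

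\paragraph{Step~2 is a legitimate, somewhat simpler variant.}
You apply convexity at the coarse level, $\langle u(s_{\hat A})\rangle_{\hat A}\ge u(s(\widehat{\Acal}_l))$. You then split $u^{\eq}-u(s(\widehat{\Acal}_l))$ exactly using $s(\widehat{\Acal}_l)=s_l-i(\Acal_l\to\widehat{\Acal}_l)$, with both quantities averaged over the fine partition. Under the position-independence assumption this gives the informal bound with no fluctuation term at all.

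The paper instead keeps $u^{\eq}-\langle u(s_{\hat A})\rangle_{\hat A}$ and splits block by block. This leaves a fine-level Jensen gap $\langle\langle u(s_A)\rangle_{A\subseteq\hat A}-u(s_{l,\hat A})\rangle_{\hat A}\ge0$. That gap must be controlled by the spatial entropy fluctuation through the conditional-variance and sampling estimate, at a cost $E_0O((\delta s_l)^{1/(D+2)})$.

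What each route buys:
\begin{itemize}
\item The paper's extra work yields an identity in terms of the sharper restricted-level proxy $u^{\eq}-\langle u(s_A)\rangle_A$, which is what Proposition~\ref{prop:supp-single-level} actually delivers.
\item Your route gives a looser restricted-level term $u^{\eq}-u(s_l)$, with $s_l$ the average. It is still valid: it upper-bounds the restricted-level work and vanishes for MITE states, even without position independence.
\end{itemize}
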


In particular, a locally MITE state has $u(s_l)=u^{\eq}$ and hence
$\overline w^{l}=0$, 
while the nonlocal correlation $i(\Acal_l\to\widehat{\Acal}_{l})$ may remain nonzero.  
Direct nonlocal control can exploit such correlations that no connected local region can detect
as a resource for work extraction.

For even $L$, we consider the entangled antipodal-pair state~\cite{chibaExactThermalEigenstates2024}:
\begin{equation}
\ket{\EAP}
=\bigotimes_{j=1}^{L/2}\ket{\psi^-}_{j,j+L/2},
\end{equation}
which is the case of $l=L/2$ in Eq.~\eqref{eq:eap} when $L=2$ (mod $4$).
This state can saturate the bound in Eq.~\eqref{eq:long-range-bound},
as discussed in Ref.~\cite{hokkyoUniversalUpperBound2025}.  
Every connected region smaller than $L/2$ is maximally mixed, 
rendering the state locally indistinguishable from the infinite-temperature state.  
By contrast, for $A'=A+L/2$, the state on $A\cup A'$ is a product of pure singlets.  
Consequently,
the total correlation density attains its maximum value of $\ln2$.
Controlled-NOT operations applied to each $A\cup A'$ 
followed by a single layer of nearest-neighbor unitaries
can then map the state onto the nearest-neighbor dimer state---namely, 
the ground state of the Majumdar-Ghosh Hamiltonian---within a circuit depth of $O(L^0)$.

\section*{Discussion}
The thermodynamic hierarchy introduced herein establishes a rigorous framework for comparing distinct thermodynamic levels 
that differ in the degree of control available to the operator. 
A state that appears thermal at a restricted-control level can become a work resource once the class of operations is enlarged, 
with the corresponding gain in extractable work bounded by the entropy difference between the two descriptions. 
Since this entropy difference is expressed as a mutual-information density, 
the hierarchy imparts an operational thermodynamic significance to information 
that is present within the state yet inaccessible under restricted control. 
The three settings considered here---spatial refinement, 
extended-time control, and nonlocal control---suggest that 
this fundamental relation is not tied to any specific notion of equilibrium, 
but rather reflects a universal structure linking accessible correlations to bounds on work extraction.

An important limitation of the present framework is 
that the initial quantum state is assumed to be fixed,
with the operation optimized for that state. 
The information quantified by the entropy difference is therefore not unknown to the operator; 
rather, it represents information that cannot be utilized within the more restricted class of operations. 
A critical open question is how these bounds are modified when operations must be selected in the absence of complete knowledge regarding the initial state, 
as well as when measurements and measurement-based feedback are incorporated. 
Such incomplete knowledge may stem from imperfect state preparation---an aspect
previously discussed as a possible ingredient for extending Planck's principle to operations 
spanning timescales of order $\Omega(L)$~\cite{chibaSecondLawThermodynamics2026}.

Another compelling avenue for future study is 
whether the thermodynamic hierarchy introduced herein emerges dynamically. 
While we have assigned thermodynamic levels to a given state and evaluated the extractable work under the corresponding classes of operations, 
the dynamics of thermalization for distinct classes of observables are expected to manifest across different length and time scales. 
Consequently, a fundamental question is how this hierarchy of notions of thermal equilibrium is generated dynamically, 
and how such a dynamical hierarchy can be exploited for work extraction.

The framework presented herein could also be extended beyond the current setting. 
In classical many-body systems, a direct analogue of MITE warrants closer inspection, 
as a pure microscopic state cannot be locally thermal in the same sense as a quantum pure state~\cite{goldsteinThermalEquilibriumMacroscopic2015,goldsteinMacroscopicMicroscopicThermal2017}. 
A finite measurement resolution may therefore be necessary to establish the corresponding hierarchy. 
More generally, while we have restricted our treatment to cyclic operations without heat exchange and to work extraction from a single isolated system,
generalizing this operational hierarchy to other thermodynamic settings could elucidate how accessible information and available control can be jointly exploited for further work extraction.

\section*{Acknowledgements}
A.H. thanks Y. Chiba and Y. Yoneta for sharing the construction of the ESEP state prior to the public release of their preprint.
A.H. was supported by KAKENHI Grant No. JP25KJ0833 from the Japan Society for the Promotion of Science (JSPS) 
and by FoPM, a WINGS Program, the University of Tokyo. 
A.H. also acknowledges support from JSR Fellowship, the University of Tokyo. 
This work was supported by KAKENHI Grant No. JP22H01152 from the Japan Society for the Promotion of Science. 
The authors gratefully acknowledge the support from the CREST program ``Quantum Frontiers'' (Grant No. JPMJCR23I1) by the Japan Science and Technology Agency. 
M.U. was supported by the RIKEN TRIP initiative.

\bibliography{../full}

\clearpage

\section*{Supplementary Information}
\setcounter{subsection}{0}
\setcounter{equation}{0}
\renewcommand{\theequation}{S\arabic{equation}}
\setcounter{page}{1}
\renewcommand{\thepage}{S\arabic{page}}
\setcounter{thm}{0}
\renewcommand{\thethm}{S\arabic{thm}}
The three informal work bounds in the main text follow from the finite-size
statements below.  
Proposition~\ref{prop:supp-single-level} bounds the extractable work at one thermodynamic level.  
Combining its homogeneous and position-resolved forms gives Theorem~\ref{thm:supp-spatial-resolution},
which is the formal version of Theorem~\ref{thm:spatial-resolution} in the main text.  
Applying the same proposition at different scales $l$ and $\hat l(\gg l)$ yields
Theorem~\ref{thm:supp-operation-time},
which gives Theorem~\ref{thm:operation-time}.  
Finally, combining the coarse-family bound in
Eq.~\eqref{eq:supp-coarse-single-level} with
Theorem~\ref{thm:supp-control-locality} gives the short-range-to-long-range
bound in Theorem~\ref{thm:control-locality}.  
Throughout the Supplementary Information, 
we set $\hbar=1$, so that energy and inverse time have the same units.

\subsection{Model and notation}

\subsubsection{Lattice and state space}

For $L\in\Z _{>0}$, let
\begin{equation}
 I_L\ceqq
 \left\{-\left\lceil\frac L2\right\rceil+1,\ldots,
 \left\lfloor\frac L2\right\rfloor\right\}
\end{equation}
be the interval of length $L$, and define the
$D$-dimensional hypercubic lattice by $\Lambda_L\ceqq I_L^D$.  
We write $X\Subset A$ when $X$ is a nonempty finite subset of a set $A$.
We assume periodic boundary conditions, 
and 
addition and
subtraction modulo $L$, represented in $\Lambda_L$, are denoted by $+_L$ and
$-_L$, respectively.  We fix an $\ell^p$ norm $|\cdot|$ on $\mathbb R^D$ for $1\le p\le\infty$, 
and define the associated distance on $\Lambda_L$ by
\begin{equation}
 d_L(x,y)
 \ceqq \min_{n\in\Z ^D}|x-y+Ln|
 =|x-_Ly|,
 \label{eq:supp-periodic-distance}
\end{equation}
which corresponds to periodic boundary conditions.
We define the coordinate diameter of $X\Subset\mathbb Z^D$ by
\begin{equation}
 \diam X\ceqq\max_{x,y\in X,1\le i\le D}|x_i-y_i|.
\end{equation}

Each site carries a Hilbert space $\mathcal H_x\cong\mathcal H$ with
$d\ceqq\dim\mathcal H<\infty$.  
For $X\Subset\Z^D$, let
$\mathcal H_X\ceqq\bigotimes_{x\in X}\mathcal H_x$ and
$\mathcal B_X\ceqq\operatorname{End}(\mathcal H_X)$.  The canonical
translation isomorphisms on $\Z ^D$ and $\Lambda_L$ are denoted by
$\tau_a:\mathcal B_X\to\mathcal B_{X+a}$ and 
$\tau_{L,a}:\mathcal B_X\to\mathcal B_{X+_La}$, respectively.  

\subsubsection{Translation-invariant interactions}

An interaction is a family of self-adjoint operators
\begin{equation}
 \Phi=\{\Phi(X)=\Phi(X)^\dagger\in\mathcal B_X:
 X\Subset\Z ^D\}.
\end{equation}
It is translation invariant if
\begin{equation}
 \Phi(X+a)=\tau_a\!\left(\Phi(X)\right)
 \label{eq:supp-ti-interaction}
\end{equation}
for all $X\Subset\Z ^D$ and $a\in\Z ^D$.  
We assume that the interaction $\Phi$ is $k_\Phi$-body:
\begin{equation}
 k_\Phi\ceqq\sup\{|X|:\Phi(X)\ne0\}<\infty,
 \label{eq:supp-k-body}
\end{equation}
where $k_\Phi$ is independent of $L$.

Let $F:[0,\infty)\to[0,\infty)$ be nonincreasing and normalized by
\begin{equation}
 \sum_{z\in\Z ^D}F(|z|)=1.
 \label{eq:supp-F-normalization}
\end{equation}
Consequently,
$\sum_{y\in\Lambda_L}F(d_L(x,y))
=\sum_{z\in \Lambda_L}F(|z|)\le1$ for every $x\in\Lambda_L$.
We assume that the weighted interaction strength of $\Phi$ satisfies
\begin{equation}
 \sum_{\substack{X\Subset\Z^D,\\X\ni x,y}}\frac{\|\Phi(X)\|}{|X|}
 \le E_0 F(|x-y|)
 \label{eq:supp-Phi-locality}
\end{equation}
for all $x,y\in\Z ^D$, 
where $E_0<\infty$ sets the local energy scale.  
In particular, the local energy operator
\begin{equation}
 h_\Phi\ceqq\sum_{\substack{X\Subset\Z^D,\\X\ni0}}\frac{\Phi(X)}{|X|}
 \label{eq:supp-local-energy}
\end{equation}
is norm convergent and obeys $\|h_\Phi\|\le E_0F(0)$.

Whenever the quotient map $\pi_L:\mathbb Z^D\to\Lambda_L$ is injective on
$X\Subset\mathbb Z^D$, we denote the image of
$O_X\in\mathcal B_X$ by $[O_X]_L$.

We call a nonempty
$Y\subseteq\Lambda_L$ $L$-small if it has a lift
$X\Subset\mathbb Z^D$ such that $\pi_L|_X$ is a bijection from $X$ to $Y$ and
$\diam X<L/2$.  Such a lift is unique up to a common translation by an
element of $L\mathbb Z^D$.  
We define
\begin{equation}
 \Phi_L(Y)\ceqq
 \begin{cases}
  [\Phi(X)]_L,&Y\text{ is $L$-small with lift }X,\\
  0,&\text{otherwise}.
 \end{cases}
 \label{eq:supp-periodized-interaction}
\end{equation}
Translation invariance of $\Phi$ makes this definition independent of the
choice of lift.  
The finite-volume Hamiltonian is defined by
\begin{align}
 H_L
 \ceqq\sum_{Y\Subset\Lambda_L}\Phi_L(Y)
  =
 \sum_{a\in\Lambda_L}
 \sum_{\substack{X\Subset\mathbb Z^D;\,0\in X;\\
                   \diam X<L/2}}
 \frac{1}{|X|}\tau_{L,a}\!\left([\Phi(X)]_L\right).
 \label{eq:supp-periodized-Hamiltonian}
\end{align}
This expression also shows that $H_L$ is
translation invariant on $\Lambda_L$.

Note that any translation-invariant two-body interaction can be represented in terms of self-adjoint operators
$h(0)\in \mathcal{B}_{\{0\}}$ and $h(x)\in\mathcal{B}_{\{0,x\}}$
for $x\ne0$,
as
\begin{align}
 \Phi(\{x\})&=\tau_xh(0),\\
 \Phi(\{x,y\})&=\tau_xh(y-x)+\tau_yh(x-y)
 \qquad (x\ne y).
 \label{eq:supp-two-body-recovery}
\end{align}
If these operators satisfy
$\|h(x)\|\le E_*F(|x|)$, then Eq.~\eqref{eq:supp-Phi-locality} holds, for
example, with $E_0=E_*/F(0)$.  

For later use, define the interaction tails
\begin{align}
 r_F(n)&\ceqq\sum_{\substack{z\in\Z ^D\\|z|\ge n+1}}F(|z|),\\
 \bar{r}_F(n)&\ceqq
 \frac{1}{n}\left[r_F(n)+\sum_{m=0}^{n-1}r_F(m)\right]
 \qquad(n\ge1).
 \label{eq:supp-F-tails}
\end{align}
Both functions are nonincreasing and vanish as $n\to\infty$.  Up to a
constant depending only on the lattice geometry, $\bar{r}_F(n)$ bounds the
interaction energy crossing the boundary of a cube of side length $n$, per
unit volume.  If $F$ has a finite first moment,
$\sum_z|z|F(|z|)<\infty$, then $\bar{r}_F(n)\in O(n^{-1})$; 
$F(r)\in O(r^{-(D+1+\eta)})$ for some $\eta>0$ is sufficient for this condition.
On the other hand, 
if $\Phi$ contains a nonzero term supported on more than one site, 
then $\bar{r}_F(n)=\Omega(n^{-1})$.

\subsubsection{Subsystem Hamiltonians and block families}

For $A\subseteq\Lambda_L$, define the subsystem Hamiltonian by retaining only
the interaction terms supported in $A$:
\begin{equation}
 H_{L,A}\ceqq\sum_{Y\Subset A}\Phi_L(Y).
 \label{eq:supp-subsystem-Hamiltonian}
\end{equation}
For a state $\sigma_A$ on $A$, let
$s_A(\sigma_A)\ceqq S(\sigma_A)/|A|$.  The canonical energy density at entropy
density $s\in[0,\ln d]$ is
\begin{align}
 u_{L,A}(s)
 &\ceqq \frac{1}{|A|}\sup_{\beta>0}
 \left\{\beta^{-1}\left[-\ln\Tr_Ae^{-\beta H_{L,A}}+|A|s\right]\right\}
 \nonumber\\
 &=\frac{1}{|A|}
 \min_{\substack{\sigma_A:\,s_A(\sigma_A)\ge s}}
 \Tr(H_{L,A}\sigma_A).
 \label{eq:supp-canonical-energy}
\end{align}
The second equality comes from the Gibbs variational
principle~\cite{wehrlGeneralPropertiesEntropy1978}.
For cubic subsystems, the thermodynamic limit
\begin{equation}
 u(s)\ceqq
 \lim_{L\to\infty}
 u_{L,\Lambda_{\lfloor L/2\rfloor}}(s)
 \label{eq:supp-u-limit}
\end{equation}
exists; 
the proof is given in Sec.~\ref{sec:supp-u-limit}. 
The function $u$ is nondecreasing and convex, 
because the finite-size energy density $u_{L,A}$ has these properties.
We define
\begin{equation}
 T(s)\ceqq \lim_{h\to0+}\frac{u(s+h)-u(s)}{h},\qquad 0\le s<\ln d.
 \label{eq:supp-temperature}
\end{equation}
Hence, we have
\begin{equation}
 u(b)-u(a)=\int_a^bT(s)\,\d s
 \qquad(0\le a\le b\le\ln d).
 \label{eq:supp-u-integral}
\end{equation}

For an integer $1\le l\le L/2$ and $z\in\Lambda_L$, let
\begin{equation}
 Q_{L,l}(z)\ceqq z+_L \Lambda_l
 \label{eq:supp-translated-cube}
\end{equation}
be the translated cube of side length $l$. 
For an offset $a\in\Lambda_L$, we define
\begin{equation}
 \Acal_{L,a}^{(l)}
 \ceqq
 \left\{Q_{L,l}\bigl(a+_L(l n)\bigr):
 n\in\{0,\ldots,\left\lfloor L/l\right\rfloor-1\}^D\right\}.
 \label{eq:supp-disjoint-blocks}
\end{equation}
The fraction of uncovered sites is
\begin{equation}
 n_l(L)\ceqq1-\left(\left\lfloor \frac{L}{l}\right\rfloor\frac{l}{L}\right)^D.
 \label{eq:supp-uncovered-fraction}
\end{equation}
We restrict to admissible scale sequences for which
$n_l(L)=O(L^{-1/2})$.  Any scale sequence with
$L/l_L\to m\in\{2,3,\ldots\}\cup\{\infty\}$ can be changed by a relative
$1+o(L^0)$ correction so that this condition holds.

For any disjoint family $\Acal$, define the volume-weighted average
\begin{equation}
 \ave{f(A)}{A\in\Acal}
 \ceqq
 \frac{1}{|\bigcup\Acal|}\sum_{A\in\Acal}|A|f(A).
 \label{eq:supp-block-average}
\end{equation}
For every function $f$ of a translated cube, direct counting gives
\begin{equation}
 \frac{1}{L^D}\sum_{a\in\Lambda_L}
 \ave{f(A)}{A\in\Acal_{L,a}^{(l)}}
 =\ave{f(Q_{L,l}(z))}{z\in\Lambda_L}\ceqq\frac{1}{L^D}\sum_{z\in\Lambda_L}f(Q_{L,l}(z)).
 \label{eq:supp-counting-identity}
\end{equation}

\subsection{Work extraction and controlled interactions}
We consider an initial state $\rho_L$ on $\Lambda_L$ whose energy corresponds to a nonnegative temperature.
Precisely, we assume that there exist constants
$s^{\eq}\in[0,\ln d]$ and $C<\infty$, independent of $L$, such that
\begin{equation}
 \left|\Tr\left(H_L\rho_L\right)-L^Du(s^{\eq})\right|
 \le CE_0L^{D-1/2}.
 \label{eq:supp-initial-energy}
\end{equation}
We abbreviate $u^{\eq}\ceqq u(s^{\eq})$.
Let $\Ucal_L$ be a class of unitaries that contains the identity.  
The extractable work density is defined as 
\begin{equation}
 w_L(\rho_L;\Ucal_L)
 \ceqq\frac{1}{L^D}
 \left[\Tr(H_L\rho_L)-
 \inf_{U\in\Ucal_L}\Tr(H_LU\rho_LU^\dagger)\right]\ge0.
 \label{eq:supp-work}
\end{equation}

We consider 
a control protocol that is generated on $t\in[0,T_L]$ by 
the cyclic Hamiltonian $K_L(t)$ associated with a time-dependent interaction $\Psi_{L,t}$,
\begin{equation}
 K_L(t)=\sum_{X\Subset\Lambda_L}\Psi_{L,t}(X),
 \qquad \Psi_{L,t}(X)^\dag=\Psi_{L,t}(X),
 \qquad K_L(0)=K_L(T_L)=H_L.
 \label{eq:supp-controlled-Hamiltonian}
\end{equation}
The interaction $\Psi_{L,t}$ may depend on $L$ and need not be translation
invariant.  
We assume that $t\mapsto\Psi_{L,t}(X)$ is piecewise continuous for every $X$.
For a protocol $\Psi=\{\Psi_{L,t}\}_{t\in[0,T_L]}$, let
$U_\Psi(t)$ solve
\begin{equation}
 i\frac{\d}{\d t}U_\Psi(t)
 =K_L(t)U_\Psi(t),\qquad U_\Psi(0)=I.
\end{equation}
For $A,X\subseteq \Lambda_L$, define
\begin{equation}
 \mu_A(X)\ceqq
 \min\{|X\cap A|,|X\cap A^c|\}.
\end{equation}
The boundary action of a protocol across $A|A^c$ is
\begin{equation}
 \mathfrak B_{\Psi,L}(A)
 \ceqq
 \int_0^{T_L}\!\d t
 \sum_{X\Subset \Lambda_L}\mu_A(X)\|\Psi_{L,t}(X)\|.
 \label{eq:supp-boundary-action}
\end{equation}
The protocol is $\epsilon$-compatible with the scale $l$ if
\begin{equation}
 \epsilon_{L,l}(\Psi)\ceqq\max_{z\in\Lambda_L}
 \frac{\mathfrak B_{\Psi,L}(Q_{L,l}(z))}{|Q_{L,l}(z)|}
 \le\epsilon.
 \label{eq:supp-compatibility}
\end{equation}
We define $\mathcal U_L(\epsilon;l)$ as the set of unitaries
$U_\Psi(T_L)$ generated by cyclic protocols satisfying
Eq.~\eqref{eq:supp-compatibility}.
Equation~\eqref{eq:supp-compatibility} means that
the energy that can be transported by $K_L(t)$ through the boundary of $Q_{L,l}(z)$
is negligible compared with the bulk energy.

As a sufficient condition for compatibility,
suppose that there exist a normalized (in the sense of Eq.~\eqref{eq:supp-F-normalization}) nonincreasing function $F'$ and
an energy scale $E_0'$ such that
\begin{equation}
 \sum_{X\Subset\Lambda_L;X\ni x,y}\frac{\|\Psi_{L,t}(X)\|}{|X|}
 \le E_0'F'(d_L(x,y))
 \label{eq:supp-control-locality}
\end{equation}
for any $x,y\in\Lambda_L$,
uniformly in $L$ and $t$.  
We refer to this type of operation as a short-range operation.
For $X\Subset\Lambda_L$, 
set $a_X=|X\cap A|$ and $b_X=|X\cap A^c|$.  Since
\begin{equation}
 \mu_A(X)=\min\{a_X,b_X\}
 \le\frac{2a_Xb_X}{a_X+b_X}
 =\frac{2}{|X|}
 \sum_{x\in X\cap A}\sum_{y\in X\cap A^c}1,
 \label{eq:supp-crossing-multiplicity}
\end{equation}
we have, at every time $t$,
\begin{align}
 \sum_{X\Subset \Lambda_L}\mu_A(X)\|\Psi_{L,t}(X)\|
 &\le2\sum_{x\in A}\sum_{y\notin A}
 \sum_{X\ni x,y}\frac{\|\Psi_{L,t}(X)\|}{|X|}
 \nonumber\\
 &\le2E_0'\sum_{x\in A}\sum_{y\notin A}F'(d_L(x,y)).
 \label{eq:supp-control-boundary-integrand}
\end{align}
Integrating Eq.~\eqref{eq:supp-control-boundary-integrand} over
$t\in[0,T_L]$ proves
\begin{equation}
 \mathfrak B_{\Psi,L}(A)
 \le2E_0'T_L\sum_{x\in A}\sum_{y\notin A}F'(d_L(x,y)).
 \label{eq:supp-control-boundary-bound}
\end{equation}
For a cube of side length $l$, the standard estimate (see Sec.~\ref{sec:supp-boundary-residual-interactions}) gives
\begin{equation}
 \sum_{x\in A}\sum_{y\notin A}F'(d_L(x,y))\le C_D\abs{A}\bar{r}_{F'}(l).
 \label{eq:supp-short-time-compatibility}
\end{equation}
Thus
$E_0'T_L=o(\bar{r}_{F'}(l)^{-1})$ implies
$\epsilon_{L,l}(\Psi)=o(l^0)$; we call this the short-time condition on scale $l$.

A protocol is spatially homogeneous if $\Psi_{L,t}$ is translation invariant
at every $t$.  The homogeneous subclass of $\Ucal_L(\epsilon;l)$ is denoted by
$\Ucal_L^{\mathrm{hom}}(\epsilon;l)$.
In particular, a homogeneous control can be obtained from a time-dependent
translation-invariant interaction $\Psi_t$ on $\Z ^D$ by the same
cutoff and periodization as in Eq.~\eqref{eq:supp-periodized-Hamiltonian}:
\begin{equation}
 K_L(t)=
 \sum_{a\in\Lambda_L}
 \sum_{\substack{X\Subset\Z ^D:\,0\in X\\
                   \diam X<L/2}}
 \frac{1}{|X|}\tau_{L,a}([\Psi_t(X)]_L).
 \label{eq:supp-periodized-control}
\end{equation}
We refer to this type of operation as a spatially homogeneous macroscopic operation.
This class extends the macroscopically uniform macroscopic operations in Ref.~\cite{chibaSecondLawThermodynamics2026}
by dropping the finite-range condition.

Finally, let $\widehat{\Acal}_{L,a}^{(l)}$ be a coarsening of a disjoint small block family
$\Acal_{L,a}^{(l)}$, so that every $\hat{A}\in\widehat{\Acal}_{L,a}^{(l)}$ is a disjoint union of
elements of $\Acal_{L,a}^{(l)}$.  A protocol is compatible with $\widehat{\Acal}_{L,a}^{(l)}$ if
\begin{equation}
 \max_{\hat{A}\in\widehat{\Acal}_{L,a}^{(l)}}
 \frac{\mathfrak B_{\Psi,L}(\hat{A})}{|\hat{A}|}
 \le\epsilon.
 \label{eq:supp-nonlocal-compatibility}
\end{equation}
We denote the corresponding class by
$\Ucal_L(\epsilon;\widehat{\Acal}_{L,a}^{(l)})$,
which includes unitary evolutions under Hamiltonians with interactions 
within each large block $\hat{A}$ of $\widehat{\Acal}_{L,a}^{(l)}$ 
in addition to those used in short-range operations.

\subsection{Work bound at a single thermodynamic level}
We derive upper bounds on work extraction at a single thermodynamic level.
The following argument is essentially the same as in Ref.~\cite{hokkyoUniversalUpperBound2025},
but we make several technical refinements.
For a disjoint family $\Acal$ of subsets of $\Lambda_L$, 
we define 
\begin{equation}
 \bar{w}_L(\rho_L;\Acal)
 \ceqq u^\eq-\ave{u(s_A(\rho_L))}{A\in\Acal},
 \label{eq:supp-fixed-proxy}
\end{equation}
which yields an upper bound on extractable work as follows.

\begin{prop}[Single-level work bound]
\label{prop:supp-single-level}
For $1\le l\le L/2$, we have
\begin{equation}
  w_L(\rho_L;\Ucal_L(\epsilon;l))
 \le
 \bar{w}_L(\rho_L;\Acal_{L,a}^{(l)})
 +E_0O\!\left(\sqrt\epsilon+\bar{r}_F(l)+L^{-1/2}\right),
 \label{eq:supp-single-level-local}
\end{equation}
for any $a\in\Lambda_L$, and
\begin{equation}
  w_L(\rho_L;\Ucal_L^{\mathrm{hom}}(\epsilon;l))
 \le
 u^\eq-u(\bar{s}_l(\rho_L))
 +E_0O\!\left(\sqrt\epsilon+\bar{r}_F(l)+L^{-1/2}\right),
 \label{eq:supp-single-level-hom}
\end{equation}
where
\begin{align}
 \mathcal T_L(\rho_L)
 &\ceqq\frac{1}{L^D}\sum_{z\in\Lambda_L}
   \tau_{L,z}(\rho_L),\\
 \bar{s}_l(\rho_L)
 &\ceqq \frac{1}{l^D}
 S\!\left((\mathcal T_L(\rho_L))_{Q_{L,l}(0)}\right).
 \label{eq:supp-twirled-entropy}
\end{align}
All constants implicit in $O(\cdot)$ are independent of
$L,l,\epsilon$, and $\rho_L$.
\end{prop}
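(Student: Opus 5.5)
Fix a cyclic protocol $\Psi$ with $U=U_\Psi(T_L)\in\Ucal_L(\epsilon;l)$, write $\rho'=U\rho_LU^\dagger$, and fix the block family $\Acal=\Acal^{(l)}_{L,a}$. The plan is to lower-bound the final energy $\Tr(H_L\rho')$ block by block. This requires two ingredients: an energy decomposition into block Hamiltonians, and an entropy estimate showing that the reduced entropies of the blocks cannot decrease appreciably under a protocol with small boundary action.

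\emph{Step 1 (energy decomposition).} Write $H_L=\sum_{A\in\Acal}H_{L,A}+R$. Here $R$ collects the interaction terms crossing block boundaries together with the terms on the uncovered sites. Using Eq.~\eqref{eq:supp-Phi-locality}, the boundary estimate \eqref{eq:supp-short-time-compatibility} applied to $F$, and $n_l(L)=O(L^{-1/2})$, one gets $\|R\|\le E_0L^D O(\bar r_F(l)+L^{-1/2})$. Hence
\[
\Tr(H_L\rho')\ge \sum_{A}\Tr(H_{L,A}\rho'_A)-E_0L^DO(\bar r_F(l)+L^{-1/2}).
\]
By the variational characterization \eqref{eq:supp-canonical-energy}, $\Tr(H_{L,A}\rho'_A)\ge |A|\,u_{L,A}(s_A(\rho'_A))$. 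Since $u_{L,A}$ for a cube of side $l$ differs from $u$ by $E_0O(\bar r_F(l))$, this is the usual subadditivity/boundary argument behind the existence of the limit in Eq.~\eqref{eq:supp-u-limit}. Therefore
\[
\Tr(H_{L,A}\rho'_A)\ge |A|\bigl[u(s_A(\rho'_A))-E_0O(\bar r_F(l))\bigr].
\]

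\emph{Step 2 (entropy cannot decrease much).} This is the main obstacle. We need to show that $S(\rho'_A)\ge S(\rho_A)-|A|\,O(\sqrt\epsilon)$ in an averaged sense, controlled by $\mathfrak B_{\Psi,L}(A)/|A|\le\epsilon$. I would follow the argument of Ref.~\cite{hokkyoUniversalUpperBound2025}, in two parts.
\begin{enumerate}
\item[(i)] Split $K_L(t)=K_A(t)+K_{A^c}(t)+K_\partial(t)$. The evolution under $K_A+K_{A^c}$ is a product unitary, which preserves $S(\rho_A)$.
\item[(ii)] Bound the entropy production caused by the crossing terms using the small-incremental-entangling/entangling-rate bound: an operator $\Psi(X)$ acting across the cut changes $S(\rho_A)$ at a rate at most $c\,\mu_A(X)\|\Psi(X)\|$. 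The factor $\mu_A(X)\ln d$ is the Schmidt-rank bound of the crossing term.
\end{enumerate}
If this rate bound is clean, integrating gives $|\Delta S_A|\le c\,\mathfrak B_{\Psi,L}(A)$, an $O(\epsilon)$ loss. If only a weaker continuity statement is available—for instance, first comparing to a product-unitary evolution via a Fannes--Audenaert bound in trace distance, then using an energy--entropy trade-off—a $\sqrt\epsilon$ term arises. This is what the stated $O(\sqrt\epsilon)$ suggests. Because $u$ is convex, nondecreasing, and its slope is controlled on the relevant range, an entropy loss $\delta$ per site translates into an energy change $E_0O(\sqrt\delta)$ in the worst case near $s=\ln d$ or $s=0$.

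\emph{Step 3 (assembling).} Combining Steps 1 and 2 gives
\[
\frac{1}{L^D}\Tr(H_L\rho')\ge \ave{u(s_A(\rho_L))}{A\in\Acal}-E_0O(\sqrt\epsilon+\bar r_F(l)+L^{-1/2}).
\]
The initial energy is $u^{\eq}+O(E_0L^{-1/2})$ by Eq.~\eqref{eq:supp-initial-energy}, so subtracting yields Eq.~\eqref{eq:supp-single-level-local}.

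\emph{Homogeneous case.} If $\Psi$ is translation invariant, then $U$ commutes with lattice translations. Hence $\Tr(H_LU\rho_LU^\dagger)=\Tr(H_LU\mathcal T_L(\rho_L)U^\dagger)$, and the initial energies of $\rho_L$ and $\mathcal T_L(\rho_L)$ coincide because $H_L$ is translation invariant. Applying the local bound \eqref{eq:supp-single-level-local} to $\mathcal T_L(\rho_L)$, every block has reduced entropy density $\bar s_l(\rho_L)$. The block average therefore collapses to $u(\bar s_l)$, giving Eq.~\eqref{eq:supp-single-level-hom}.
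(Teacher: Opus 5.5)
\textbf{Overall structure.} Your proposal follows the same structure as the paper's proof:
\begin{itemize}
\item the block decomposition with $\|H_L-\sum_{A}H_{L,A}\|\le E_0L^D O(\bar r_F(l)+n_l(L))$;
\item the variational lower bound $\Tr(H_{L,A}\rho'_A)\ge |A|\,u_{L,A}(s_A(\rho'))$;
\item the finite-size replacement $u_{L,A}\to u$ at cost $E_0O(\bar r_F(l))$;
\item the small-incremental-entangling rate bound integrated against the boundary action;
\item the twirling argument for homogeneous protocols.
\end{itemize}
Your first option in Step 2, the clean rate bound, is the one the paper uses. Compatibility is imposed on every translated cube, not on average, so each block satisfies $|s_A(\rho')-s_A(\rho_L)|\le c_{\mathrm{SIE}}\ln(d)\,\epsilon$, an $O(\epsilon)$ change. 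The $\sqrt\epsilon$ does \emph{not} come from a Fannes-type weakening of the entropy estimate. If you composed an $O(\sqrt\epsilon)$ entropy estimate with a square-root energy--entropy relation, you would get $\epsilon^{1/4}$, not $\sqrt\epsilon$.

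\textbf{The gap.} The missing step is the conversion of an $O(\epsilon)$ entropy change into an $E_0O(\sqrt\epsilon)$ energy change, with constants independent of $L$, $l$, $\epsilon$, and $\rho_L$. You justify it by convexity, monotonicity, and a ``controlled slope,'' but that inference fails.
\begin{itemize}
\item The slope of $u$ is the temperature $T(s)$, which is unbounded as $s\to\ln d$.
\item Convexity plus monotonicity permit any H\"older exponent at that endpoint. For example, $s\mapsto-(\ln d-s)^{\alpha}$ with $\alpha<1/2$ is convex and increasing.
\item Near $s=0$ there is no difficulty at all.
\item Without further input you only obtain $\omega_u(c\epsilon)$ for some non-quantitative modulus of continuity $\omega_u$ of $u$.
\end{itemize}

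\textbf{What closes it.} You need a uniform rate of blow-up, $T(s)\le E_0\sqrt{c/(\ln d-s)}$, independent of $L$ and of the block $A$. The paper obtains this as follows:
\begin{enumerate}
\item A uniform high-temperature energy-variance bound $\operatorname{Var}_{\beta,H_{L,A}}(H_{L,A})\le c_{\mathrm{var}}E_0^2|A|$ for $0\le\beta\le\beta_0$. It is proved by adapting the Kim--Kuwahara--Saito cluster expansion to summable interaction profiles, via a majorant $G\ge F$ satisfying row-sum and convolution bounds.
\item Combined with $\mathrm{d}\beta/\mathrm{d}s=-[\beta\operatorname{Var}/|A|]^{-1}$, this gives $-\mathrm{d}(\beta^2)/\mathrm{d}s\ge 2/(c_{\mathrm{var}}E_0^2)$.
\item Integrating yields $u_{L,A}(\ln d)-u_{L,A}(\ln d-q)\le E_0\sqrt{2c_{\mathrm{var}}q}+q/\beta_0$.
\item Convexity then shows that the increment over any interval of given length is largest at the upper endpoint.
\end{enumerate}
Some such quantitative high-temperature input is indispensable: a variance bound, or equivalently a cumulant-generating-function bound for local Hamiltonians near infinite temperature. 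It is the main technical content behind the $\sqrt\epsilon$ term, and your proposal does not supply it.
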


\begin{proof}
Fix an offset $a$ and write $\Acal=\Acal_{L,a}^{(l)}$ and
$R=\Lambda_L\setminus\bigcup\Acal$.  Let
\begin{equation}
 H^R_{L,\Acal}\ceqq H_L-\sum_{A\in\Acal}H_{L,A}.
\end{equation}
The boundary estimate proved in
Sec.~\ref{sec:supp-boundary-residual-interactions} gives
\begin{equation}
 \|H^R_{L,\Acal}\|
 \le C_{D,k_\Phi}E_0\bigl[\bar{r}_F(l)+n_l(L)\bigr]L^D.
 \label{eq:supp-residual-used}
\end{equation}
For any unitary $U$, the minimum-energy characterization in
Eq.~\eqref{eq:supp-canonical-energy} yields
\begin{align}
 \Tr(H_LU\rho_LU^\dagger)
 &\ge
 \sum_{A\in\Acal}|A|
 u_{L,A}(s_A(U\rho_LU^\dagger))
 -\|H^R_{L,\Acal}\|.
 \label{eq:supp-min-energy-step}
\end{align}
The small incremental entangling theorem~\cite{bravyiUpperBoundsEntangling2007,vanacoleyenEntanglementRatesArea2013} and
Eq.~\eqref{eq:supp-compatibility} give
\begin{equation}
 \abs{s_A(\rho_L)-s_A(U\rho_LU^\dagger)}
 \le c_{\mathrm{SIE}}\ln(d)\,\epsilon.
 \label{eq:supp-entropy-loss-used}
\end{equation}
Proposition~\ref{prop:supp-uniform-holder}, proved below by adapting the
high-temperature cluster expansion of
Refs.~\cite{kimThermalAreaLaw2025,kimErratumThermalArea2025}, gives,
uniformly in $L$ and $A$,
\begin{equation}
 \abs{u_{L,A}(s)-u_{L,A}(s')}\le C E_0\sqrt{\abs{s-s'}}
 \qquad(0\le s,s'\le\ln d).
 \label{eq:supp-continuity-used}
\end{equation}
Furthermore, for translated cubes,
a standard estimate yields
\begin{equation}
 \sup_{s\in[0,\ln d]}
 |u_{L,Q_{L,l}(z)}(s)-u(s)|
 \le C_{D,k_\Phi}E_0\bar{r}_F(l).
 \label{eq:supp-u-finite-size-used}
\end{equation}
See Sec.~\ref{sec:supp-technical-estimates} for the derivation of these inequalities.
Combining Eqs.~\eqref{eq:supp-initial-energy} 
and~\eqref{eq:supp-residual-used}--\eqref{eq:supp-u-finite-size-used}, 
we obtain
\begin{align}
 w_L(\rho_L;\Ucal_L(\epsilon;l))
 \le
 u^{\eq}
 -
 \frac{1}{L^D}\sum_{A\in\Acal}|A|
 u(s_A(\rho_L))+E_0O\!\left(\sqrt\epsilon+\bar{r}_F(l)
 +n_l(L)+L^{-1/2}\right).
 \label{eq:supp-fixed-partition-bound}
\end{align}
Since $\abs{u(s)}\le E_0$ and
$n_l(L)=O(L^{-1/2})$, this proves
Eq.~\eqref{eq:supp-single-level-local}.

For a homogeneous protocol, both $H_L$ and $U$ commute with translations.
Consequently,
\begin{equation}
 w_L(\rho_L;\Ucal_L^{\mathrm{hom}}(\epsilon;l))
 =w_L(\mathcal T_L(\rho_L);\Ucal_L^{\mathrm{hom}}(\epsilon;l))
 \le w_L(\mathcal T_L(\rho_L);\Ucal_L(\epsilon;l)).
\end{equation}
All translated $l$-cubes of $\mathcal T_L(\rho_L)$ have entropy density
$\bar{s}_l(\rho_L)$.  Applying
Eq.~\eqref{eq:supp-single-level-local} to the twirled state proves
Eq.~\eqref{eq:supp-single-level-hom}.
\end{proof}

Averaging over $a\in\Lambda_L$ gives
\begin{align}
  w_L(\rho_L;\Ucal_L(\epsilon;l))
 &\le
 \bar{w}_L(\rho_L;l)
 +E_0O\!\left(\sqrt\epsilon+\bar{r}_F(l)+L^{-1/2}\right)\\
\bar{w}_L(\rho_L;l)&\ceqq u^\eq-\ave{u(s_{Q_{L,l}(z)}(\rho_L))}{z\in\Lambda_L}.
\end{align}
If $\rho_L$ is in MITE on scale $l(\gg1)$, 
$s_{Q_{L,l}(z)}(\rho_L)\simeq s^\eq$ and no extensive work can be extracted by 
any $U\in\Ucal_L(\epsilon;l)$ beyond the density $E_0O(\sqrt{\epsilon})$.
In particular, by short-range operations for short time on scale $l$,
MITE states have zero extractable work density,
which recovers the result of Ref.~\cite{hokkyoUniversalUpperBound2025}.

Similarly to $\bar{w}_L(\rho_L;l)$, we define
\begin{equation}
\bar{w}_L^{\mathrm{hom}}(\rho_L;l)\ceqq u^\eq-u(\bar{s}_l(\rho_L)),
\end{equation}
which appears on the right-hand side of Eq.~\eqref{eq:supp-single-level-hom}.
If $\rho_L$ is in iMATE, $\bar{s}_l(\rho_L)= s^\eq+o(l^0)$~\cite{chibaSecondLawThermodynamics2026},
and no extensive work can be extracted by 
any $U\in\Ucal^{\mathrm{hom}}_L(\epsilon;l)$ for sufficiently large $l$, 
in particular, by homogeneous macroscopic operations for short time on scale $l$.
This corresponds to the macroscopic passivity established in Ref.~\cite{chibaSecondLawThermodynamics2026}.

The same proof applies to a fixed coarse block family.  In particular, if
$\widehat{\Acal}_{L,a}^{(l)}$ is a coarsening of $\Acal_{L,a}^{(l)}$, subadditivity of the
boundary strength under unions gives
\begin{equation}
 w_L(\rho_L;\Ucal_L(\epsilon;\widehat{\Acal}_{L,a}^{(l)}))
 \le \bar{w}_L(\rho_L;\widehat{\Acal}_{L,a}^{(l)})
 +E_0O\!\left(\sqrt\epsilon+\bar{r}_F(l)+L^{-1/2}\right).
 \label{eq:supp-coarse-single-level}
\end{equation}

\subsection{Work extraction between thermodynamic levels}

For later use, define the spatial mean and fluctuation of the entropy density
at scale $l$ by
\begin{align}
 s_l(\rho_L)
 &\ceqq\frac{1}{L^D}\sum_{z\in\Lambda_L}s_{Q_{L,l}(z)}(\rho_L),\\
 \delta s_l(\rho_L)
 &\ceqq
 \left[\frac{1}{L^D}\sum_{z\in\Lambda_L}
 \bigl(s_{Q_{L,l}(z)}(\rho_L)-s_l(\rho_L)\bigr)^2\right]^{1/2}.
 \label{eq:supp-entropy-fluctuation}
\end{align}
As assumed in the main text, we are mainly interested in states satisfying
\begin{equation}
 \delta s_l(\rho_L)\ll1,
 \label{eq:supp-delta-assumption}
\end{equation}
which covers translation-invariant states, 
pure product states, and those obtained from these states via $o(l)$-depth local quantum circuits.

\subsubsection{Increasing the spatial resolution of control}
The Holevo information density associated with the position of the local
subsystem is
\begin{equation}
 \tilde\chi_l(\rho_L)
 \ceqq\bar{s}_l(\rho_L)-s_l(\rho_L)\ge0.
 \label{eq:supp-Holevo-density}
\end{equation}

\begin{thm}[Hierarchical work bound: spatial resolution]
\label{thm:supp-spatial-resolution}
For every $1\le l\le L/2$, we have
\begin{align}
\bar{w}_L(\rho_L;l)
&=\bar{w}_L^{\mathrm{hom}}(\rho_L;l)+u(\bar{s}_l(\rho_L))-\ave{u(s_{Q_{L,l}(z)}(\rho_L))}{z\in\Lambda_L}\nonumber\\
&\le \bar{w}_L^{\mathrm{hom}}(\rho_L;l)
 +\int_{\bar{s}_l(\rho_L)-\tilde\chi_l(\rho_L)}^{\bar{s}_l(\rho_L)}T(s)\,\d s.
 \label{eq:supp-spatial-resolution}
\end{align}
The gap between the two sides is nonnegative and bounded by $E_0O\!\left(\sqrt{\delta s_l(\rho_L)}\right)$.
\end{thm}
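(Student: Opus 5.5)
The equality in Eq.~\eqref{eq:supp-spatial-resolution} follows directly from the definitions of $\bar w_L(\rho_L;l)$ and $\bar w^{\mathrm{hom}}_L(\rho_L;l)$: add and subtract $u(\bar s_l(\rho_L))$. The content of the theorem is therefore the inequality
\begin{equation*}
 u(\bar s_l)-\ave{u(s_{Q_{L,l}(z)})}{z\in\Lambda_L}\le u(\bar s_l)-u(\bar s_l-\tilde\chi_l)=u(\bar s_l)-u(s_l),
\end{equation*}
where we used $\bar s_l-\tilde\chi_l=s_l$ and Eq.~\eqref{eq:supp-u-integral}. So the plan is to show $\ave{u(s_{Q_{L,l}(z)})}{z}\ge u(s_l)$, together with a matching upper bound of order $E_0\sqrt{\delta s_l}$.

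\textbf{Lower bound.} This is Jensen's inequality. The function $u$ is convex, as stated after Eq.~\eqref{eq:supp-u-limit}, and $s_l(\rho_L)$ is by definition the uniform average of the local entropy densities $s_{Q_{L,l}(z)}(\rho_L)$ over $z\in\Lambda_L$. Hence $\ave{u(s_{Q_{L,l}(z)})}{z}\ge u(s_l)$, which gives the inequality and shows the gap is nonnegative. Before writing $\tilde\chi_l=\bar s_l-s_l\ge0$, I would check that $\bar s_l\ge s_l$ holds by concavity of the von Neumann entropy. The reduced state of the twirl, $(\mathcal T_L\rho_L)_{Q_{L,l}(0)}$, equals the average of the translated local states $\rho^{(z)}_l$, so its entropy is at least the average of their entropies. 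This also ensures the integration range lies in $[0,\ln d]$, where $T$ is defined.

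\textbf{Upper bound on the gap.} The gap equals $\ave{u(s_z)-u(s_l)}{z}$ with $s_z\ceqq s_{Q_{L,l}(z)}(\rho_L)$. I would use a $\tfrac12$-Hölder estimate for the limiting function $u$,
\begin{equation*}
 |u(s)-u(s')|\le CE_0\sqrt{|s-s'|}.
\end{equation*}
This follows from Eq.~\eqref{eq:supp-continuity-used} for $u_{L,Q_{L,l}(0)}$, uniformly in $L$, by passing to the limit in Eq.~\eqref{eq:supp-u-limit}. It then gives
\begin{equation*}
 \ave{u(s_z)-u(s_l)}{z}\le CE_0\ave{|s_z-s_l|^{1/2}}{z}\le CE_0\bigl(\ave{(s_z-s_l)^2}{z}\bigr)^{1/4}=CE_0\sqrt{\delta s_l(\rho_L)},
\end{equation*}
where the second step is Jensen's inequality, or equivalently Hölder's inequality applied to the concave power $x\mapsto x^{1/4}$ of $(s_z-s_l)^2$.

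The main obstacle I anticipate is making sure the Hölder continuity holds for $u$ itself and uniformly up to the endpoints $0$ and $\ln d$, since $T(s)$ may diverge as $s\to\ln d$ and only a square-root modulus is available there. If passing to the limit is delicate, the fallback is to apply Eq.~\eqref{eq:supp-continuity-used} at finite size and absorb the difference between $u_{L,Q_{L,l}(z)}$ and $u$ using Eq.~\eqref{eq:supp-u-finite-size-used}, at the cost of an extra $E_0O(\bar r_F(l))$ term.
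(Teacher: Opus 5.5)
Your proposal is correct and follows the paper's proof essentially step for step. The equality is by definition, the inequality is Jensen's inequality for the convex $u$, and the gap bound uses the $\tfrac12$-H\"older continuity of $u$ (Proposition~\ref{prop:supp-uniform-holder} passed to the limit) followed by Jensen's inequality for $x\mapsto x^{1/4}$. The differences are cosmetic: the paper first drops the terms with $s_{Q_{L,l}(z)}<s_l$ using monotonicity of $u$, whereas you bound every term by its absolute value, and you also check $\tilde\chi_l\ge0$ via concavity of the entropy, which the paper simply asserts.
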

\begin{proof}
The first equality follows from the definitions.
The convexity of $u(s)$ yields the inequality.

By Proposition~\ref{prop:supp-uniform-holder} and 
Eq.~\eqref{eq:supp-torus-cube-u-limit}, we have
\begin{equation}
  \sup_{s\ne s'\in[0,\ln d]}\frac{\abs{u(s)-u(s')}}{\sqrt{\abs{s-s'}}}\le E_0 c
\end{equation}
for some $c<\infty$; 
see Sec.~\ref{sec:supp-uniform-holder} and~\ref{sec:supp-u-limit} for the proof.
Using this, we have
\begin{align}
  \ave{u(s_{Q_{L,l}(z)}(\rho_L))}{z\in\Lambda_L}-u(s_l(\rho_L))
  &\le \frac{1}{L^D}\sum_{z\in\Lambda_L;s_{Q_{L,l}(z)}(\rho_L)\ge s_l(\rho_L)}\abs{u(s_{Q_{L,l}(z)}(\rho_L))-u(s_l(\rho_L))}\nonumber\\
  &\le E_0c\frac{1}{L^D}\sum_{z\in\Lambda_L;s_{Q_{L,l}(z)}(\rho_L)\ge s_l(\rho_L)}\abs{s_{Q_{L,l}(z)}(\rho_L)-s_l(\rho_L)}^{1/2}\nonumber\\
  &\le E_0c\ave{\abs{s_{Q_{L,l}(z)}(\rho_L)-s_l(\rho_L)}^{1/2}}{z\in\Lambda_L}\nonumber\\
  &\le E_0c\sqrt{\delta s_l(\rho_L)},
  \label{eq:supp-inverse-jensen}
\end{align}
which completes the proof.
\end{proof}

\subsubsection{Extending the control duration}
Let $1\le l<\hat{l}\le L/2$, 
where $\hat{l}$ is an integer multiple of $l$. 
We are interested in the regime $\hat{l}\gg l$. 
For each $w\in\Lambda_L$, 
let $\mathcal Z_{l|\hat{l}}(w)$ be the unique set of translation parameters such that the cubes $\{Q_{L,l}(z)\mid z\in\mathcal Z_{l|\hat{l}}(w)\}$ form a partition of $Q_{L,\hat{l}}(w)$. 
We define 
\begin{align}
  s_{l,w}(\rho_L)&=\ave{s_{Q_{L,l}(z)}(\rho_L)}{z\in\mathcal Z_{l|\hat{l}}(w)},\\
  i_w(l\to\hat{l})(\rho_L)&=s_{l,w}(\rho_L)-s_{Q_{L,\hat{l}}(w)}(\rho_L)=\frac{1}{\hat{l}^D}\left[
    S\left(\bigotimes_{z\in\mathcal Z_{l|\hat{l}}(w)}(\rho_L)_{Q_{L,l}(z)}\right)-S\left((\rho_L)_{Q_{L,\hat{l}}(w)}\right)
  \right].
\end{align}
The quantity $i_w(l\to\hat{l})(\rho_L)$ is the total correlation per site among the $l$-cubes in this partition. 
We omit the subscript $w$ when the quantity is independent of $w$.
Then we have the following.
\begin{thm}[Hierarchical work bound: operation time]\label{thm:supp-operation-time}
  \begin{equation}
    \bar{w}_L(\rho_L;\hat{l})
    =
    \bar{w}_L(\rho_L;l)
    +
    \ave{
    \int_{s_{l,w}(\rho_L)-i_w(l\to\hat{l})(\rho_L)}^{s_{l,w}(\rho_L)}T(s)\d s
    }{w\in\Lambda_L}
    +E_0O(\sqrt{\delta s_l(\rho_L)}).
    \label{eq:supp-short-to-long}
  \end{equation}
In particular, 
suppose that every translation changes the state $\rho_L$ only by a product of on-site unitaries. 
Then $s_{l,w}(\rho_L)$ and $i_w(l\to\hat{l})(\rho_L)$ are independent of $w$, and the relation reduces to 
\begin{equation}
  \bar{w}_L(\rho_L;\hat{l})
    =
    \bar{w}_L(\rho_L;l)
    +
    \int_{s_{l}(\rho_L)-i(l\to\hat{l})(\rho_L)}^{s_{l}(\rho_L)}T(s)\d s.
\end{equation}
\end{thm}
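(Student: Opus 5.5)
The plan is to reduce everything to the definitions of $\bar w_L(\rho_L;l)$ and $\bar w_L(\rho_L;\hat l)$, together with the integral identity Eq.~\eqref{eq:supp-u-integral} and convexity of $u$, in the same way as in the proof of Theorem~\ref{thm:supp-spatial-resolution}.

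\textbf{Reduction to an error term.} Since $s_{l,w}(\rho_L)-i_w(l\to\hat l)(\rho_L)=s_{Q_{L,\hat l}(w)}(\rho_L)\in[0,\ln d]$, Eq.~\eqref{eq:supp-u-integral} gives
\[
\int_{s_{l,w}-i_w}^{s_{l,w}}T(s)\,\d s=u(s_{l,w}(\rho_L))-u(s_{Q_{L,\hat l}(w)}(\rho_L)).
\]
Substituting the definitions of $\bar w_L(\rho_L;\hat l)$ and $\bar w_L(\rho_L;l)$, the $u^{\eq}$ terms and the terms $\ave{u(s_{Q_{L,\hat l}(w)})}{w\in\Lambda_L}$ cancel. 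Equation~\eqref{eq:supp-short-to-long} is therefore equivalent to showing that
\[
\Delta\ceqq\ave{u(s_{Q_{L,l}(z)}(\rho_L))}{z\in\Lambda_L}-\ave{u(s_{l,w}(\rho_L))}{w\in\Lambda_L}
\]
is nonnegative and at most $E_0O(\sqrt{\delta s_l(\rho_L)})$.

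\textbf{Counting step.} The key combinatorial fact is an analogue of Eq.~\eqref{eq:supp-counting-identity}. As $w$ ranges over $\Lambda_L$, each $z\in\Lambda_L$ lies in $\mathcal Z_{l|\hat l}(w)$ for exactly $(\hat l/l)^D$ values of $w$. These are the $w$ with $z-_Lw$ in the sublattice $l\Z^D$ restricted to the $\hat l$-cube, and their number is uniform in $z$ by translation invariance of the construction. Consequently, for any function $g$,
\[
\frac{1}{L^D}\sum_{w}\ave{g(z)}{z\in\mathcal Z_{l|\hat l}(w)}=\frac{1}{L^D}\sum_z g(z).
\]
Taking $g(z)=s_{Q_{L,l}(z)}(\rho_L)$ gives $\ave{s_{l,w}}{w}=s_l(\rho_L)$.

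\textbf{Two-sided bound on $\Delta$.} For the lower bound, Jensen's inequality for convex $u$ gives $u(s_{l,w})\le\ave{u(s_{Q_{L,l}(z)})}{z\in\mathcal Z_{l|\hat l}(w)}$ for each $w$. Averaging over $w$ with the counting identity applied to $g(z)=u(s_{Q_{L,l}(z)})$ yields $\Delta\ge0$. For the upper bound, Jensen again gives $u(s_l)\le\ave{u(s_{l,w})}{w}$, since $\ave{s_{l,w}}{w}=s_l$. Hence $\Delta\le\ave{u(s_{Q_{L,l}(z)})}{z}-u(s_l)$, and this is at most $E_0c\sqrt{\delta s_l(\rho_L)}$ by exactly the chain of inequalities in Eq.~\eqref{eq:supp-inverse-jensen}. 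That chain uses the uniform $1/2$-Hölder continuity of $u$ (Proposition~\ref{prop:supp-uniform-holder}) together with $\ave{|x|^{1/2}}{}\le\ave{x^2}{}^{1/4}$.

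\textbf{Special case.} If every translation changes $\rho_L$ only by a product of on-site unitaries, then every reduced state on a translated cube is unitarily equivalent to its untranslated counterpart. Thus $s_{Q_{L,l}(z)}$ and $s_{Q_{L,\hat l}(w)}$ are independent of $z$ and $w$, and $s_{l,w}=s_l$ and $i_w=i(l\to\hat l)$ are independent of $w$. In particular $\delta s_l=0$, so $\Delta=0$ exactly and the relation holds with equality.

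The only step needing care is the counting identity, specifically the uniform multiplicity of $z$ across the partitions $\mathcal Z_{l|\hat l}(w)$ under periodic boundary conditions. This holds because $\hat l$ is a multiple of $l$ and the cubes in $\mathcal Z_{l|\hat l}(w)$ are the translates $w+_L\Lambda_l$-type blocks indexed by $l\{0,\dots,\hat l/l-1\}^D$. Everything else is convexity and the Hölder bound already established.
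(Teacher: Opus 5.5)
Your proof is correct, and it takes a shorter route than the paper's. The paper does not sandwich globally. For each offset $a$ it groups the blocks of $\Acal_{L,a}^{(l)}$ into the $\hat l$-family $\Acal_{L,a'}^{(\hat l)}$, using the shifted offset $a'$. It then writes your $\Delta$ as the $a$-average of the within-block Jensen gaps $\ave{u(s_A)}{A\in\Acal_{L,a}^{(l)}(\hat A)}-u(s_{l,\hat A})$. Each gap is bounded by H\"older continuity and the within-$\hat A$ (conditional) variance of the $l$-scale entropy densities. Finally it uses conditional variance $\le$ total variance and concavity of $x\mapsto x^{1/4}$ under the $a$-average to reach $\sqrt{\delta s_l}$.

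Your argument replaces all of this with the two Jensen inequalities $u(s_l)\le\ave{u(s_{l,w})}{w\in\Lambda_L}\le\ave{u(s_{Q_{L,l}(z)})}{z\in\Lambda_L}$. These need only the direct counting identity for $\mathcal Z_{l|\hat l}(w)$, after which Eq.~\eqref{eq:supp-inverse-jensen} applies verbatim. This removes the auxiliary block families and the variance decomposition entirely.

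What the paper's route buys is a finer intermediate estimate. Before it is relaxed to $\delta s_l$, its error is controlled by the fluctuation of $s_A$ within each $\hat l$-block. That can be much smaller than the global Jensen gap you use when the entropy density varies only on scales larger than $\hat l$. Its nested-block bookkeeping also parallels the proof of Theorem~\ref{thm:supp-control-locality}.

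One small imprecision: because $\Lambda_l$ is centered, $\mathcal Z_{l|\hat l}(w)=w+_L\bigl[(\lceil l/2\rceil-\lceil\hat l/2\rceil)\mathbf 1+l\{0,\dots,\hat l/l-1\}^D\bigr]$. So $z-_Lw$ lies in a shifted piece of the sublattice, not in $l\Z^D$ itself. The multiplicity $(\hat l/l)^D$ and its independence of $z$ are unaffected.
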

\begin{proof}
By definition, we have
\begin{align}
\bar{w}_L(\rho_L;\hat{l})
-\bar{w}_L(\rho_L;l)
&=\ave{u(s_{Q_{L,l}(z)}(\rho_L))}{z\in\Lambda_L}-\ave{u(s_{Q_{L,\hat{l}}(w)}(\rho_L))}{w\in\Lambda_L}.
\end{align}
For each $a\in\Lambda_L$, consider the large-block family $\Acal_{L,a'}^{(\hat{l})}$ 
obtained by grouping the blocks of $\Acal_{L,a}^{(l)}$ into cubes of side length $\hat{l}$, 
where 
\begin{equation}
 a'\ceqq
 a+_L\left(\left\lceil\frac{\hat{l}}{2}\right\rceil
 -\left\lceil\frac{l}{2}\right\rceil\right)\mathbf 1,
 \qquad \mathbf 1=(1,\ldots,1).
\end{equation}
Define $\Acal_{L,a}^{(l)}(\hat{A})\ceqq\{A\in\Acal_{L,a}^{(l)}\mid A\subseteq\hat{A}\}$ 
and $\Acal_{L,a}^{\circ}\ceqq \bigcup_{\hat A\in\mathcal A_{L,a'}^{(\hat l)}}\Acal_{L,a}^{(l)}(\hat{A})$.
Then we have
\begin{align}
  \ave{u(s_{Q_{L,l}(z)}(\rho_L))}{z\in\Lambda_L}
  &=\ave{\ave{u(s_A(\rho_L))}{A\in\Acal_{L,a}^\circ}}{a\in\Lambda_L}\nonumber\\
  &=\ave{\ave{\ave{u(s_A(\rho_L))}{A\in\Acal_{L,a}^{(l)}(\hat{A})}}{\hat{A}\in\Acal_{L,a'}^{(\hat{l})}}}{a\in\Lambda_L}\nonumber\\
  &=\ave{\ave{u(s_{l,\hat{A}}(\rho_L))}{\hat{A}\in\Acal_{L,a'}^{(\hat{l})}}}{a\in\Lambda_L}
  +\ave{\ave{\ave{u(s_A(\rho_L))}{A\in\Acal_{L,a}^{(l)}(\hat{A})}-u(s_{l,\hat{A}}(\rho_L))
  }{\hat{A}\in\Acal_{L,a'}^{(\hat{l})}}}{a\in\Lambda_L}\nonumber\\
  &=\ave{u(s_{l,w}(\rho_L))}{w\in\Lambda_L}
  +\ave{\ave{\ave{u(s_A(\rho_L))}{A\in\Acal_{L,a}^{(l)}(\hat{A})}-u(s_{l,\hat{A}}(\rho_L))
  }{\hat{A}\in\Acal_{L,a'}^{(\hat{l})}}}{a\in\Lambda_L}.
  \label{eq:supp-short-to-long-1}
\end{align}
Here, $s_{l,\hat{A}}(\rho_L)\ceqq\ave{s_{A}(\rho_L)}{A\in\Acal_{L,a}^{(l)}(\hat{A})}$ 
is the average of $s_A(\rho_L)$ over the small blocks $A\subseteq\hat{A}$.

The difference between $\ave{u(s_{l,w}(\rho_L))}{w\in\Lambda_L}$ and $\ave{u(s_{Q_{L,\hat{l}}(w)}(\rho_L))}{w\in\Lambda_L}$ 
is the integral term in Eq.~\eqref{eq:supp-short-to-long}.
It therefore remains to bound the second term in the last line of Eq.~\eqref{eq:supp-short-to-long-1}.
Proceeding as in Eq.~\eqref{eq:supp-inverse-jensen}, we obtain 
\begin{align}
0\le\ave{\ave{u(s_A(\rho_L))}{A\in\Acal_{L,a}^{(l)}(\hat{A})}
-
u(s_{l,\hat{A}}(\rho_L))}{\hat{A}\in\Acal_{L,a'}^{(\hat{l})}}
&\le 
E_0c
\ave{\ave{\abs{s_A(\rho_L)-s_{l,\hat{A}}(\rho_L)}^{1/2}}{A\in\Acal_{L,a}^{(l)}(\hat{A})}}{\hat{A}\in\Acal_{L,a'}^{(\hat{l})}}\nonumber\\
&\le
E_0c
\ave{\ave{\abs{s_A(\rho_L)-s_{l,\hat{A}}(\rho_L)}^{2}}{A\in\Acal_{L,a}^{(l)}(\hat{A})}}{\hat{A}\in\Acal_{L,a'}^{(\hat{l})}}^{1/4}\nonumber\\
&=E_0c\left(
  \ave{s_A(\rho_L)^2}{A\in\Acal_{L,a}^{\circ}}-\ave{s_{l,\hat{A}}(\rho_L)^2}{\hat{A}\in\Acal_{L,a'}^{(\hat{l})}}
\right)^{1/4}\nonumber\\
&\le
E_0c\left(
  \ave{s_A(\rho_L)^2}{A\in\Acal_{L,a}^{\circ}}-\ave{s_{A}(\rho_L)}{A\in\Acal_{L,a}^{\circ}}^2
\right)^{1/4}.
\end{align}
Averaging over $a\in\Lambda_L$ gives
\begin{equation}
  \ave{\left(
  \ave{s_A(\rho_L)^2}{A\in\Acal_{L,a}^{\circ}}-\ave{s_{A}(\rho_L)}{A\in\Acal_{L,a}^{\circ}}^2
\right)^{1/4}}{a\in\Lambda_L}
\le \left(
  \ave{s_{Q_{L,l}(z)}(\rho_L)^2}{z\in\Lambda_L}-\ave{\ave{s_{A}(\rho_L)}{A\in\Acal_{L,a}^{\circ}}^2}{a\in\Lambda_L}
\right)^{1/4}
\le \sqrt{\delta s_l(\rho_L)}
\end{equation}
and hence the remaining term is bounded by $CE_0\sqrt{\delta s_l(\rho_L)}$. 
This proves the theorem.
\end{proof}

For short-range operations, this comparison becomes relevant when the short-time condition at scale $l$ is no longer satisfied. 
Suppose that $E_0'T_L=\Omega(\bar{r}_{F'}(l)^{-1})$, while the protocol still satisfies $E_0'T_L=o(\bar{r}_{F'}(\hat{l})^{-1})$. 
In this regime, Eq.~\eqref{eq:supp-short-to-long} gives an upper bound containing an additional term determined by the correlations among the $l$-scale subsystems. 
These correlations are not accessible under the short-time condition at scale $l$.

\subsubsection{Relaxing the locality constraint on control}
We fix an offset $a\in\Lambda_L$ and omit it from the notation for simplicity.
Let
$\widehat{\Acal}_L^{(l)}$ be an exact coarsening of $\Acal_L^{(l)}$, so that
every $\hat{A}\in\widehat{\Acal}_L^{(l)}$ is a disjoint union of the blocks in
\begin{equation}
 \Acal_L^{(l)}(\hat{A})
 \ceqq\{A\in\Acal_L^{(l)}:A\subseteq\hat{A}\}.
\end{equation}
For each $\hat{A}$, define
\begin{align}
 s_{l,\hat{A}}(\rho_L)
 &\ceqq
 \ave{s_A(\rho_L)}{A\in\Acal_L^{(l)}(\hat{A})},\\
 i_{\hat{A}}(\rho_L)
 &\ceqq
 s_{l,\hat{A}}(\rho_L)-s_{\hat{A}}(\rho_L)\nonumber\\
 &=
 \frac{1}{|\hat{A}|}
 \left[
 \sum_{A\in\Acal_L^{(l)}(\hat{A})}S((\rho_L)_A)
 -S((\rho_L)_{\hat{A}})
 \right]\ge0.
 \label{eq:supp-coarse-correlation}
\end{align}
The quantity $i_{\hat{A}}$ is the total correlation density among the small
blocks contained in $\hat{A}$.
\begin{thm}[Hierarchical work bound: control locality]
\label{thm:supp-control-locality}
  \begin{equation}
    \bar{w}_L(\rho_L;\widehat{\Acal}_{L}^{(l)})
  =
  \bar{w}_L(\rho_L;\Acal_{L}^{(l)})
  +
  \ave{\int_{s_{l,\hat{A}}(\rho_L)-i_{\hat{A}}(\rho_L)}^{s_{l,\hat{A}}(\rho_L)}T(s)\d s}
  {{\hat{A}}\in\widehat{\Acal}_L^{(l)}}
  +E_0O((\delta s_l(\rho_L))^{1/(D+2)}).
  \end{equation}
\end{thm}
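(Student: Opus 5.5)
The plan follows the proof of Theorem~\ref{thm:supp-operation-time}, with the grouping of small cubes into large ones replaced by the fixed coarsening $\widehat{\Acal}_L^{(l)}$. I would start by rewriting the difference from the definition \eqref{eq:supp-fixed-proxy}:
\begin{equation*}
\bar{w}_L(\rho_L;\widehat{\Acal}_{L}^{(l)})-\bar{w}_L(\rho_L;\Acal_{L}^{(l)})
=\ave{u(s_A(\rho_L))}{A\in\Acal_L^{(l)}}-\ave{u(s_{\hat A}(\rho_L))}{\hat A\in\widehat{\Acal}_L^{(l)}}.
\end{equation*}
Since $\widehat{\Acal}_L^{(l)}$ exactly coarsens $\Acal_L^{(l)}$, the two families cover the same sites. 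The volume-weighted average then factorizes as
\begin{equation*}
\ave{f(A)}{A\in\Acal_L^{(l)}}=\ave{\ave{f(A)}{A\in\Acal_L^{(l)}(\hat A)}}{\hat A\in\widehat{\Acal}_L^{(l)}}.
\end{equation*}
I would split $\ave{u(s_A)}{A\in\Acal_L^{(l)}(\hat A)}$ into $u(s_{l,\hat A})$ plus a Jensen gap. By \eqref{eq:supp-u-integral} and $s_{\hat A}=s_{l,\hat A}-i_{\hat A}$, the term $u(s_{l,\hat A})-u(s_{\hat A})$ is exactly the integral in the statement. What remains is to show that the averaged Jensen gap
\begin{equation*}
G\ceqq\ave{\ave{u(s_A)}{A\in\Acal_L^{(l)}(\hat A)}-u(s_{l,\hat A})}{\hat A\in\widehat{\Acal}_L^{(l)}}
\end{equation*}
is $E_0O((\delta s_l)^{1/(D+2)})$. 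By convexity of $u$, $G\ge0$.

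For the upper bound I would argue exactly as in \eqref{eq:supp-inverse-jensen}, using the $1/2$-Hölder continuity of $u$ from Proposition~\ref{prop:supp-uniform-holder}. This gives $G\le E_0c\,\ave{|s_A-s_{l,\hat A}|^{1/2}}{}$. Inserting $s_l(\rho_L)$ and using the triangle inequality, this is at most $2E_0c\,\ave{|s_A-s_l|^{1/2}}{A\in\Acal_L^{(l)}}$. By Jensen (concavity), it is further bounded by $2E_0c\,\ave{(s_A-s_l)^2}{A\in\Acal_L^{(l)}}^{1/4}$.

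The main obstacle is that $\delta s_l$ in \eqref{eq:supp-entropy-fluctuation} averages over \emph{all} translates $Q_{L,l}(z)$, whereas here the offset $a$ is fixed. In Theorem~\ref{thm:supp-operation-time} this was resolved by averaging over $a$, but a fixed arbitrary coarsening gives no such freedom. By Markov's inequality, the fraction of $z$ with $|s_{Q_{L,l}(z)}-s_l|>\theta$ is at most $\delta s_l^2/\theta^2$. This alone does not control the particular lattice of cubes $\Acal_L^{(l)}$, which is a fraction $l^{-D}$ of all translates.

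To get around this, I would compare each block $A=Q_{L,l}(z)$ with its translates $Q_{L,l}(z+y)$ for $|y|_\infty\le r$ with $r\ll l$. Strong subadditivity, or rather the continuity bound $|S(\rho_X)-S(\rho_{X'})|\le|X\triangle X'|\ln d$, gives
\begin{equation*}
|s_{Q(z)}-s_{Q(z+y)}|\le C_D\,(r/l)\ln d.
\end{equation*}
This makes $s_A$ close to the average of $s$ over an $r$-neighbourhood of shifts, so
\begin{equation*}
\ave{|s_A-s_l|}{A}\le C(r/l)+\bigl(\ave{|s_{Q(z)}-s_l|}{}\ \text{over a fraction }\sim(r/l)^D\text{ of all }z\bigr)\lesssim r/l+(l/r)^D\delta s_l.
\end{equation*}
Optimizing with $r/l=\delta s_l^{1/(D+1)}$ gives $\ave{|s_A-s_l|}{A}\lesssim\delta s_l^{1/(D+1)}$. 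Combining this with the Hölder step requires some care: I would rather apply Hölder continuity to the first moment, $\ave{|s_A-s_l|^{1/2}}{}\le\ave{|s_A-s_l|}{}^{1/2}$, and choose the exponent balance to reach the stated $(\delta s_l)^{1/(D+2)}$. I expect this counting argument, including the exact exponent $1/(D+2)$ and the case $r<1$ when $\delta s_l$ is very small, to be the delicate step; everything else is bookkeeping.
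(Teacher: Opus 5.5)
Your overall route is the paper's: exact coarsening to factor the volume-weighted average, $u(s_{l,\hat A})-u(s_{\hat A})=\int T\,\d s$, convexity for $G\ge0$, the $1/2$-H\"older continuity of $u$, and comparing each block of the fixed sublattice with its shifts via $|S(\rho_X)-S(\rho_{X'})|\le|X\triangle X'|\ln d$. The gap is in the final balancing, which you leave open. The first-moment version you propose cannot give the stated rate. From $\ave{\abs{s_A-s_l}}{A}\lesssim r/l+(l/r)^D\delta s_l$, the optimal choice $r/l\sim(\delta s_l)^{1/(D+1)}$ gives a first moment of order $(\delta s_l)^{1/(D+1)}$. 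The square root from the H\"older step then yields only $G=E_0O((\delta s_l)^{1/(2D+2)})$. This is strictly weaker than $(\delta s_l)^{1/(D+2)}$ for every $D\ge1$ (exponent $1/4$ instead of $1/3$ in $D=1$). No choice of $r$ in this chain fixes it, because the square root is taken after a linear balance.

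The missing step is to do the shift-averaging on squared deviations. You already reduced $G$ to $2E_0c\,\ave{(s_A-s_l)^2}{A}^{1/4}$, so that second moment is the quantity to estimate. Set $f(z)=s_{Q_{L,l}(z)}(\rho_L)$, $\mu=s_l(\rho_L)$, $\delta=\delta s_l(\rho_L)$. Average $\abs{f(z)-\mu}^2\le2\abs{f(z)-f(z+_Lx)}^2+2\abs{f(z+_Lx)-\mu}^2$ over the block corners $z$ and shifts $x\in\{0,\dots,r-1\}^D$. The points $z+_Lx$ are distinct and fill a fraction at least $(r/(2l))^D$ of $\Lambda_L$, so
\begin{equation*}
\ave{(s_A-\mu)^2}{A\in\Acal_L^{(l)}}\le C_D\Bigl[\Bigl(\frac{r-1}{l}\Bigr)^2(\ln d)^2+\Bigl(\frac{l}{r}\Bigr)^D\delta^2\Bigr].
\end{equation*}
Now the translation error enters squared, and the sampling penalty $(l/r)^D$ multiplies $\delta^2$ rather than $\delta$. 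The choice $r=\lceil l(\delta/\ln d)^{2/(D+2)}\rceil$ makes both terms $O(\delta^{4/(D+2)})$, and the fourth root gives exactly $(\delta s_l)^{1/(D+2)}$. Measuring the shift error by $r-1$ (shifts start at $0$) also settles your small-$\delta$ worry: if the ceiling gives $r=1$, the first term vanishes.

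One further correction: your intermediate claim $\ave{\abs{s_A-s_{l,\hat A}}^{1/2}}{}\le2\ave{\abs{s_A-s_l}^{1/2}}{A}$ does not follow. The quantity $\abs{s_{l,\hat A}-s_l}^{1/2}$ can exceed the within-$\hat A$ average of $\abs{s_A-s_l}^{1/2}$, because concavity goes the wrong way. Pass to second moments directly instead. For example, as the paper does, note that the within-$\hat A$ conditional variance is bounded by $\ave{(s_A-\mu)^2}{A\in\Acal_L^{(l)}}$, since the block mean minimizes the squared deviation.
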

\begin{proof}
  \begin{align}
    &\bar{w}_L(\rho_L;\widehat{\Acal}_{L}^{(l)})
    -\bar{w}_L(\rho_L;\Acal_{L}^{(l)})\nonumber\\
    &=\ave{\ave{u(s_A(\rho_L))}{A\in\Acal_{L}^{(l)}(\hat{A})}-u(s_{\hat{A}}(\rho_L))}{{\hat{A}}\in\widehat{\Acal}_L^{(l)}}\nonumber\\
    &=\ave{\int_{s_{l,\hat{A}}(\rho_L)-i_{\hat{A}}(\rho_L)}^{s_{l,\hat{A}}(\rho_L)}T(s)\d s}
  {{\hat{A}}\in\widehat{\Acal}_L^{(l)}}
  +\ave{\ave{u(s_A(\rho_L))}{A\in\Acal_{L}^{(l)}(\hat{A})}-u(s_{l,\hat{A}}(\rho_L))}{{\hat{A}}\in\widehat{\Acal}_L^{(l)}}.
  \end{align}
  The second term in the last line is bounded 
  by the variance of $s_A$ over $A\in\Acal_L^{(l)}(\hat{A})$, as in Eq.~\eqref{eq:supp-inverse-jensen}.
  Using the continuity bound, $\abs{s_A-s_{A+_Lx}}\le \abs{A\triangle (A+_Lx)}/\abs{A}\ln d$, 
  we can further bound this variance by the spatial variance $\delta s_l(\rho_L)$ as follows.
  
  Let
  $f(z)=s_{Q_{L,l}(z)}(\rho_L)$,
  $\mu=s_l(\rho_L)$, and $\delta=\delta s_l(\rho_L)$.  The conditional
  variance satisfies
  \begin{align}
  V_{\Acal_L^{(l)}|\widehat{\Acal}_L^{(l)}}
  &\ceqq
  \ave{
  \ave{[s_A(\rho_L)-s_{l,\hat{A}}(\rho_L)]^2}
        {A\in\Acal_L^{(l)}(\hat{A})}
  }{\hat{A}\in\widehat{\Acal}_L^{(l)}}\nonumber\\
  &\le
  \ave{[s_A(\rho_L)-\mu]^2}{A\in\Acal_L^{(l)}}.
  \label{eq:supp-conditional-variance}
  \end{align}
  To estimate the last term from the spatial variance $\delta^2$, let
  $q=\lfloor L/l\rfloor$ and let
  $Z_a=\{a+_L(ln):n\in\{0,\ldots,q-1\}^D\}$ be the set of small-block
  centers.  For $1\le r\le l$, set
  $K_r=\{0,\ldots,r-1\}^D$.  The points $z+_Lx$ with
  $z\in Z_a$ and $x\in K_r$ are all distinct.  
  Moreover, the Araki--Lieb inequality~\cite{arakiEntropyInequalities1970} gives
  \begin{equation}
  |f(z+_Lx)-f(z)|
  \le
  \frac{|Q_{L,l}(z+_Lx)\mathbin{\triangle}Q_{L,l}(z)|}{l^D}\ln d
  \le C_D\frac{r-1}{l}\ln d .
  \label{eq:supp-entropy-translation}
  \end{equation}
  Averaging
  $|f(z)-\mu|^2\le
  2|f(z)-f(z+_Lx)|^2+2|f(z+_Lx)-\mu|^2$
  over $z\in Z_a$ and $x\in K_r$ therefore yields, 
  \begin{equation}
  \ave{[s_A(\rho_L)-\mu]^2}{A\in\Acal_L^{(l)}}
  \le
  2(C_D\ln d)^2\left(\frac{r-1}{l}\right)^2
  +2\left(\frac{L}{qr}\right)^D\delta^2
  \le C_{D}\left[
  \left(\frac{r-1}{l}\right)^2(\ln d)^2
  +\left(\frac{l}{r}\right)^D\delta^2
  \right].
  \label{eq:supp-sampled-variance}
  \end{equation}
  Here we used $l\le L/2$, 
  which implies $L/q\le (l^{-1}-L^{-1})^{-1}\le2l$. 
  If $\delta>0$, we choose $r\ceqq \lceil l(\delta/\ln d)^{2/(D+2)} \rceil$ in Eq.~\eqref{eq:supp-sampled-variance}.  This gives
  \begin{equation}
  \ave{[s_A(\rho_L)-\mu]^2}{A\in\Acal_L^{(l)}}
  \le C_{D,d}\delta^{4/(D+2)}.
  \label{eq:supp-sampled-variance-final}
  \end{equation}
  Thus Eq.~\eqref{eq:supp-sampled-variance-final} holds if $\delta>0$.
  Note that Eq.~\eqref{eq:supp-sampled-variance-final} trivially holds if $\delta=0$.
  The H\"older continuity of $u$ completes the proof.
\end{proof}
Reference~\cite{hokkyoUniversalUpperBound2025} demonstrated with an explicit toy model that long-range operations can extract energy even from states that are thermal on spatial scales of order $L$. 
The theorem above provides a general quantitative bound on this effect.

\subsection{Technical estimates}
\label{sec:supp-technical-estimates}

\subsubsection{Boundary and residual interactions}
\label{sec:supp-boundary-residual-interactions}
The periodized interaction in Eq.~\eqref{eq:supp-periodized-interaction} 
satisfies the finite-volume analogue of Eq.~\eqref{eq:supp-Phi-locality}:
\begin{equation}
 \sum_{Y\Subset\Lambda_L,Y\ni x,y}\frac{\|\Phi_L(Y)\|}{|Y|}
 \le E_0F(d_L(x,y))
 \label{eq:supp-finite-pair-strength}
\end{equation}
for every $x,y\in\Lambda_L$.  

To prove this, fix $x,y\in\Lambda_L$ and set
$z\ceqq y-_Lx\in\Lambda_L$.  
If one coordinate of $z$ has absolute value $L/2$, 
no $L$-small set can contain both $x$ and $y$, 
and the left-hand side of Eq.~\eqref{eq:supp-finite-pair-strength} vanishes.  
We may therefore assume that every coordinate of $z$ has
absolute value strictly smaller than $L/2$.
Let $Y$ be an $L$-small set containing $x$ and $y$, let $\hat{X}$ be its lift, 
and denote the preimages of $x$ and $y$ by $\hat{x}$ and
$\hat{y}$.  Set $X\ceqq\hat{X}-\hat{x}$.  
Then $0\in X,\diam X<L/2$ and $\hat{y}-\hat{x}=z\in X$.  
Translation invariance and the definition of $\Phi_L$ give
\begin{equation}
 |Y|=|X|,
 \qquad
 \|\Phi_L(Y)\|=\|\Phi(\hat{X})\|=\|\Phi(X)\|.
 \label{eq:supp-finite-pair-lift-norm}
\end{equation}
Conversely, every
\begin{equation}
 X\in\mathfrak S_{L,z}
 \ceqq
 \left\{X\Subset\mathbb Z^D:
 0,z\in X,\ \diam X<L/2\right\}
\end{equation}
determines the $L$-small set $Y=x+_LX$.  
The uniqueness of the lift shows that these two constructions are inverse to each other.  Consequently,
\begin{align}
 \sum_{Y\Subset\Lambda_L,Y\ni x,y}\frac{\|\Phi_L(Y)\|}{|Y|}
 &=\sum_{X\in\mathfrak S_{L,z}}
 \frac{\|\Phi(X)\|}{|X|}\nonumber\\
 &\le
 \sum_{\substack{X\Subset\Z^D,\\X\ni0,z}}
 \frac{\|\Phi(X)\|}{|X|}\nonumber\\
 &\le E_0F(|z|)
 =E_0F(d_L(x,y)),
\end{align}
where the last inequality is Eq.~\eqref{eq:supp-Phi-locality}, 
and the last equality follows from Eq.~\eqref{eq:supp-periodic-distance}.

For $A\subseteq\Lambda_L$, define its boundary strength by
\begin{equation}
 b_{\Phi,L}(A)
 \ceqq\sum_{Y\Subset\Lambda_L;Y\cap A\ne\emptyset,Y\cap A^c\ne\emptyset}\|\Phi_L(Y)\|.
 \label{eq:supp-boundary-strength}
\end{equation}
For a support $Y$ that crosses the cut between $A$ and $A^c$, 
the same argument as Eq.~\eqref{eq:supp-crossing-multiplicity} yields
\begin{align}
 b_{\Phi,L}(A)
 &\le2\sum_{x\in A}\sum_{y\in \Lambda_L\setminus A}
 \sum_{Y\Subset\Lambda_L,Y\ni x,y}\frac{\|\Phi_L(Y)\|}{|Y|}
 \nonumber\\
 &\le2E_0\sum_{x\in A}\sum_{y\in \Lambda_L\setminus A}F(d_L(x,y)).
 \label{eq:supp-boundary-pair-bound}
\end{align}
For a set $Q\Subset\Lambda_L$, let $m_Q(x)\ceqq\min_{y\in\Lambda_L\setminus Q}d_L(x,y)-1$ be the distance between a site
$x\in Q$ and $\Lambda_L\setminus Q$.  
Then
\begin{equation}
 \sum_{y\in\Lambda_L\setminus Q}F(d_L(x,y))
 \le r_F(m_Q(x)).
\end{equation}
For each cube $Q=Q_{L,l}(z)$ and $m$, the corresponding layer 
$\{x\in Q\mid m\le m_Q(x)<m+1\}$
contains at most
$\kappa_{D}l^{D-1}$ sites, and there are fewer than $l$ nonempty layers.
Therefore,
\begin{align}
 \sum_{x\in Q_{L,l}(z)}
 \sum_{y\in \Lambda_L\setminus Q_{L,l}(z)}F(d_L(x,y))
 &\le \kappa_{D}l^{D-1}\sum_{m=0}^{l-1}r_F(m)\nonumber\\
 &\le \kappa_{D}l^D\bar{r}_F(l).
 \label{eq:supp-cube-boundary}
\end{align}

More generally, let $\Acal$ be any disjoint family and
$R=\Lambda_L\setminus\bigcup\Acal$.  Every term in
$H^R_{L,\Acal}$ either crosses the boundary of at least one $A\in\Acal$ or
intersects $R$.  Hence
\begin{align}
 \|H^R_{L,\Acal}\|
 &\le\sum_{A\in\Acal}b_{\Phi,L}(A)
 +\sum_{Y:Y\cap R\ne\varnothing}\|\Phi_L(Y)\|
 \nonumber\\
 &\le\sum_{A\in\Acal}b_{\Phi,L}(A)
 +k_\Phi E_0F(0)|R|.
 \label{eq:supp-general-residual}
\end{align}
The second inequality follows from
$\sum_{Y\ni x}\|\Phi_L(Y)\|\le k_\Phi E_0F(0)$, which is
Eq.~\eqref{eq:supp-finite-pair-strength} with $x=y$, multiplied by the
finite-body bound.
Equations~\eqref{eq:supp-boundary-pair-bound}, 
\eqref{eq:supp-cube-boundary} and
\eqref{eq:supp-general-residual} imply
Eq.~\eqref{eq:supp-residual-used}.  They also show that
$b_{\Phi,L}(A\cup B)\le b_{\Phi,L}(A)+b_{\Phi,L}(B)$ for disjoint $A,B$,
which is the bound used for Eq.~\eqref{eq:supp-coarse-single-level}.

\subsubsection{Entropy change under a compatible control}

For a term $\Psi_{L,t}(X)$ crossing the cut between $A$ and $A^c$, 
the Hilbert spaces of its support on the two sides have dimensions $d^{|X\cap A|}$ and
$d^{|X\cap A^c|}$.  Applying the small-incremental-entangling theorem~\cite{bravyiUpperBoundsEntangling2007,vanacoleyenEntanglementRatesArea2013,audenaertQuantumSkewDivergence2014} to each
term and using the linearity of the entropy derivative in the Hamiltonian
gives~\cite{marienEntanglementRatesStability2016,gongEntanglementAreaLaws2017}
\begin{equation}
 \left|\frac{\d}{\d t}S_A(\rho_L(t))\right|
 \le c_{\mathrm{SIE}}\ln d
 \sum_{X\Subset \Lambda_L}\mu_A(X)\|\Psi_{L,t}(X)\|,
 \label{eq:supp-SIE-rate}
\end{equation}
where $c_{\mathrm{SIE}}\le 6.5$ is a numerical constant.
Whenever $\mathfrak B_{\Psi,L}(A)/|A|\le\epsilon$, integration yields
\begin{equation}
 |s_A(\rho_L(T_L))-s_A(\rho_L(0))|
 \le c_{\mathrm{SIE}}\ln(d)\,\epsilon,
 \label{eq:supp-SIE-integrated}
\end{equation}
which proves Eq.~\eqref{eq:supp-entropy-loss-used}.

\subsubsection{High-temperature clustering and the energy-variance bound}
\label{sec:supp-high-temperature-variance}

To obtain the continuity bound for the canonical energy used in the above proofs, 
we show a uniform bound on the energy variance. 
This is obtained from a high-temperature clustering estimate, 
which is a slight modification of the clustering estimate obtained in Ref.~\cite{kimThermalAreaLaw2025,kimErratumThermalArea2025}.

For a self-adjoint operator $K$, we define
\begin{align}
 \rho_{\beta,K}
 &\ceqq\frac{e^{-\beta K}}{\Tr e^{-\beta K}},\\
 \operatorname{Cor}_{\beta,K}(O,O')
 &\ceqq
 \Tr(\rho_{\beta,K}OO')
 -\Tr(\rho_{\beta,K}O)\Tr(\rho_{\beta,K}O'),\\
 \operatorname{Var}_{\beta,K}(O)
 &\ceqq\operatorname{Cor}_{\beta,K}(O,O)
 =\Tr(\rho_{\beta,K}O^2)-\Tr(\rho_{\beta,K}O)^2
 \label{eq:supp-correlation-variance}
\end{align}
for self-adjoint $O$ in the last line.

\begin{lem}[High-temperature clustering for a summable profile]
\label{lem:supp-high-temperature-clustering}
Let $F$ be nonincreasing and satisfy
Eq.~\eqref{eq:supp-F-normalization}.  There exist a nonincreasing function
$G:[0,\infty)\to(0,\infty)$ with $F\le G$, constants $C_G<\infty$ and
$\beta_{\mathrm{c}}>0$, and, for every $\beta_0<\beta_{\mathrm{c}}$, a constant
$C_{\beta_0}<\infty$ such that
\begin{equation}
 \|G\|_1\ceqq\sum_{z\in\Z ^D}G(|z|)<\infty,
 \label{eq:supp-G-summable}
\end{equation}
and
\begin{equation}
 \sum_{z\in\Z ^D}G(|x-z|)G(|z-y|)
 \le C_GG(|x-y|)
 \label{eq:supp-G-convolution}
\end{equation}
and, for every nonempty $A\subseteq\Lambda_L$, $0\le\beta\le\beta_0$, and
operators $O_X,O_Y$ supported on $X,Y\subseteq A$,
\begin{align}
 |\operatorname{Cor}_{\beta,H_{L,A}}(O_X,O_Y)|
 &\le C_{\beta_0}\|O_X\|\|O_Y\|
 e^{(|X|+|Y|)/k_\Phi}
 \sum_{x\in X}\sum_{y\in Y}\mathcal G_L(x,y),
 \label{eq:supp-high-temperature-clustering}
\end{align}
where $\mathcal G_L:\Lambda_L\times\Lambda_L\to(0,\infty)$ is defined by
\begin{equation}
 \mathcal G_L(x,y)
 \ceqq\sum_{n\in\Z ^D}
 G(| x- y+Ln|).
 \label{eq:supp-periodized-G}
\end{equation}  
The constants are independent of $L$ and $A$.
\end{lem}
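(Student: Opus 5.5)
We construct $G$ from $F$, then prove the clustering bound by a high-temperature cluster expansion in which only the convolution property of $G$ is used; the bound is then periodized.

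\emph{Step 1: the profile $G$.} A standard choice is the regularized convolution-dominating envelope
\begin{equation*}
 G(r)\ceqq \sup_{r'\ge r}\,\bigl[F(r')+(1+r')^{-(D+1)}\bigr]\cdot\phi(r),
\end{equation*}
with $\phi$ tuned so that $G$ is nonincreasing and summable. The requirement is that $G$ dominates $F$ and is sub-convolutive. For a summable nonincreasing profile, sub-convolutivity follows from splitting $\sum_z G(|x-z|)G(|z-y|)$ according to whether $|x-z|\ge|x-y|/2$ or $|z-y|\ge|x-y|/2$. This bounds the sum by $2\|G\|_1G(|x-y|/2)$. One then needs the doubling condition $G(r/2)\le C G(r)$. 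Since $F$ alone may not satisfy doubling, I would replace $F$ by a slowly varying majorant $G\ge F$ that is still summable. Such a majorant exists because summability of a monotone sequence can always be preserved while regularizing its decay to satisfy $G(r/2)\le 2^{D+1}G(r)$, e.g.\ by taking the least doubling-regular majorant. This step gives Eqs.~\eqref{eq:supp-G-summable} and~\eqref{eq:supp-G-convolution}.

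\emph{Step 2: cluster expansion.} Following Refs.~\cite{kimThermalAreaLaw2025,kimErratumThermalArea2025}, I write the correlation via the duplicated-system representation
\begin{equation*}
 \operatorname{Cor}=\tfrac12\Tr\bigl[\rho\otimes\rho\,(O_X-O_X')(O_Y-O_Y')\bigr].
\end{equation*}
I expand $e^{-\beta(H_{L,A}+H'_{L,A})}$ in polymers, i.e.\ connected clusters of interaction terms $\Phi_L(Y)$. Only clusters linking $X$ to $Y$ survive the antisymmetrization. Each link contributes a factor $\beta\|\Phi_L(Y)\|$. The sum over connected chains from $x\in X$ to $y\in Y$ is controlled by iterating Eq.~\eqref{eq:supp-finite-pair-strength}. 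A chain $x=z_0,z_1,\dots,z_n=y$ gives $(c\beta E_0)^n\sum_{z_i}\prod_i F(d_L(z_{i-1},z_i))$. Replacing $F$ by $G$ and using the convolution bound $n-1$ times yields $(c\beta E_0C_G)^nG$, summed in the periodized form $\mathcal G_L(x,y)$ of Eq.~\eqref{eq:supp-periodized-G}, because lifts on the torus are counted via $n\in\Z^D$. The series converges geometrically for $\beta<\beta_{\mathrm c}\sim(cE_0C_G)^{-1}$, which gives $C_{\beta_0}$. The factor $e^{(|X|+|Y|)/k_\Phi}$ absorbs the combinatorics of choosing the starting sites in $X,Y$ and the finite-body branching.

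\emph{Step 3: uniformity.} Restricting to $H_{L,A}$ only removes terms, so every bound is monotone and the constants are independent of $A$ and $L$.

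\emph{Main obstacle.} The hardest part is the convergence and combinatorics of the cluster expansion for general $k_\Phi$-body interactions with only summable decay. I would handle it by adapting the Kotecký–Preiss criterion with weights $e^{|Y|/k_\Phi}$, so that the exponential factor in Eq.~\eqref{eq:supp-high-temperature-clustering} appears naturally.
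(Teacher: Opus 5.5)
Your proposal takes essentially the same route as the paper. The paper also builds a summable doubling-regular majorant $G\ge F$, as the dyadic envelope $g_n=\max\{f_n,q^{-1}g_{n-1}\}$ with $q>2^D$; this is essentially your least majorant with $q=2^{D+1}$. Its summability is exactly the geometric estimate $\sum_n 2^{Dn}g_n\le(1-2^D/q)^{-1}\sum_n 2^{Dn}f_n$, which is what your unproved claim about preserving summability needs and which holds only because $q>2^D$. The paper then gets sub-convolutivity from the split at $|x-y|/2$ and runs the Kim--Kuwahara--Saito expansion with $\mathcal{G}_L$ in place of the power-law profile, as you do.

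One difference in Step 2: the paper does not re-derive that expansion. It notes that the expansion uses the profile only through the row sum $\sum_{y\in\Lambda_L}\mathcal{G}_L(x,y)=\|G\|_1$ and the periodized convolution bound, both obtained by reindexing the periodic images into $\mathbb{Z}^D$. Your chain estimate is therefore only the spine of each connected cluster; its branches must be controlled by the row sum. Also, the choice of endpoints is carried by the double sum over $x\in X$, $y\in Y$, not by the factor $e^{(|X|+|Y|)/k_\Phi}$.
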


\begin{proof}
\emph{Construction of the majorant.}
For a nonincreasing radial function $G$, the condition
$\norm{G}_1<\infty$ is equivalent to
$\sum_{n\ge0}2^{Dn}G(2^n)<\infty$. 
This is because there exist constants
$0<c_{D}\le C_{D}<\infty$ such that, for all $n\in\Z_{\ge 1}$,
\begin{equation}
  c_{D}2^{Dn}\le
  \abs{\{z\in\Z^D:2^{n-1}\le |z|<2^{n}\}}
  \le C_{D}2^{Dn}.
  \label{eq:supp-shell-volume}
\end{equation}

Fix $q>2^D$ and set
\begin{align}
 f_{-1}&=F(0),& f_n&=F(2^n)\quad(n\ge0),\\
 g_{-1}&=f_{-1},&
 g_n&=\max\{f_n,q^{-1}g_{n-1}\}\quad(n\ge0).
\end{align}
Define $G(r)=g_{-1}$ for $0\le r<1$ and
$G(r)=g_n$ for $2^n\le r<2^{n+1}$.  Then $G$ is nonincreasing,
$F\le G$, and
\begin{equation}
 G(r/2)\le qG(r).
 \label{eq:supp-G-doubling}
\end{equation}
The shell-volume bound~\eqref{eq:supp-shell-volume} and Eq.~\eqref{eq:supp-F-normalization} imply
$\sum_{n\ge -1}2^{Dn}f_n<\infty$. 
Moreover,
$g_n\le f_n+q^{-1}g_{n-1}$ yields
\begin{equation}
 g_n\le\sum_{m=-1}^nq^{-(n-m)}f_m,
\end{equation}
and therefore
\begin{equation}
 \sum_{n\ge0}2^{Dn}g_n
 \le\frac{C_D}{1-2^D/q}\sum_{n\ge-1}2^{Dn}f_n<\infty.
\end{equation}
This proves Eq.~\eqref{eq:supp-G-summable}.  Finally, for
$R=|x-y|$, every $z$ satisfies $|x-z|\ge R/2$ or $|z-y|\ge R/2$.
Using Eq.~\eqref{eq:supp-G-doubling},
\begin{align}
 \sum_zG(|x-z|)G(|z-y|)
 &\le2G(R/2)\|G\|_1
 \le2q\|G\|_1G(R).
\end{align}
Thus Eq.~\eqref{eq:supp-G-convolution} holds, and we may take
$C_G=2q\|G\|_1$.

\emph{Periodization and the cluster expansion.}
Reindexing the periodic images as points of $\Z ^D$ gives
\begin{align}
 \sum_{y\in\Lambda_L}\mathcal G_L(x,y)&=\|G\|_1,
 \label{eq:supp-periodized-G-row-sum}\\
 \sum_{z\in\Lambda_L}\mathcal G_L(x,z)\mathcal G_L(z,y)
 &\le C_G\mathcal G_L(x,y).
 \label{eq:supp-periodized-G-convolution}
\end{align}

To apply the cluster expansion of
Kim, Kuwahara, and Saito~\cite{kimThermalAreaLaw2025,kimErratumThermalArea2025}, define the unweighted
pair strength
\begin{equation}
 J_L(x,y)\ceqq\sum_{Y\Subset\Lambda_L;Y\ni x,y}\|\Phi_L(Y)\|.
\end{equation}
Equations~\eqref{eq:supp-k-body},
\eqref{eq:supp-finite-pair-strength}, and $F\le G$ imply
\begin{equation}
 J_L(x,y)
 \le k_\Phi E_0F(d_L(x,y))
 \le k_\Phi E_0\mathcal G_L(x,y).
 \label{eq:supp-KKS-pair}
\end{equation}
The clustering theorem of Refs.~\cite{kimThermalAreaLaw2025,kimErratumThermalArea2025} 
assumes a power-law profile $F(r)\in O(r^{-\alpha})$ for $\alpha>D$,
but it is used only in two estimates: a uniform row-sum bound and the convolution estimate.
Equations~\eqref{eq:supp-periodized-G-row-sum} and~\eqref{eq:supp-periodized-G-convolution} provide these estimates
with constants $\|G\|_1$ and $C_G$, respectively.  
The remaining estimates in
Refs.~\cite{kimThermalAreaLaw2025,kimErratumThermalArea2025} use only the
row-sum bound, the convolution bound, and $|X|\le k_\Phi$.  Replacing the
power-law profile by $\mathcal G_L$ and retaining the sums over the two
endpoint sites, rather than bounding both endpoints by their minimum
separation, gives Eq.~\eqref{eq:supp-high-temperature-clustering}.  More explicitly, with
$\lambda=k_\Phi E_0$ and
$u_G=\max\{C_G,\|G\|_1\}$, the same counting proof converges whenever
\begin{equation}
 4\beta\lambda u_Gk_\Phi e^2
 e^{2\beta\lambda u_Gk_\Phi}<1,
 \label{eq:supp-high-temperature-condition}
\end{equation}
which defines a threshold $\beta_{\mathrm{c}}>0$.  
For every $\beta_0<\beta_{\mathrm{c}}$, 
the resulting constant in the counting argument is uniformly bounded on $[0,\beta_0]$, 
which gives $C_{\beta_0}<\infty$.  
This proves the claimed inequality.
\end{proof}

Note that Lemma~\ref{lem:supp-high-temperature-clustering} implies 
the algebraic decay of correlations,
although this form is not needed below.  
For $n\ge 0$, set
\begin{equation}
  h_n\ceqq
2^{-Dn}\sum_{k\ge n}2^{Dk}g_k.
\end{equation}
Any two points in
$(x-_L y)+L\Z^D$ 
are separated by at least $L$.  
Hence, if $2^n\le \abs{x-_Ly}<2^{n+1}$, 
a packing estimate shows that the number of 
periodic images in the shell $\{z\in(x-_L y)+L\Z^D\mid 2^k\le |z|<2^{k+1}\}$ is at most
$C_D2^{D(k-n)}$.  
Note that this is empty if $k<n$ by the definition of $x-_Ly$.
Therefore,
\begin{equation}
\mathcal G_L(x,y)
\le
C_D2^{-Dn}\sum_{k\ge n}2^{Dk}g_k=C_Dh_n.
\end{equation}
Moreover, $h_n$ is nonincreasing and
\begin{equation}
2^{Dn}h_n
=
\sum_{k\ge n}2^{Dk}g_k
\longrightarrow 0,
\end{equation}
because $\sum_n2^{Dn}g_n<\infty$.
Thus, by defining $\widetilde G$ to be a sufficiently large constant
on $[0,1)$ and $C_{D,q}h_n$ on each interval $[2^n,2^{n+1})$, we obtain
\begin{equation}
  \mathcal G_L(x,y)\le\widetilde G(d_L(x,y)),
\qquad
\widetilde G(r)=o(r^{-D}).
\end{equation}

Finally, we use the high-temperature clustering estimate to bound the canonical energy variance.  
\begin{cor}[Uniform high-temperature energy-variance bound]
\label{cor:supp-energy-variance}
For every $\beta_0<\beta_{\mathrm{c}}$, there is a constant $c_{\mathrm{var}}<\infty$
such that
\begin{equation}
 \sup_{\substack{L,\,\varnothing\ne A\subseteq\Lambda_L\\
                  0\le\beta\le\beta_0}}
 \frac{\operatorname{Var}_{\beta,H_{L,A}}(H_{L,A})}{|A|}
 \le c_{\mathrm{var}}E_0^2.
 \label{eq:supp-canonical-fluctuation}
\end{equation}
The constant is independent of $L$ and $A$.
\end{cor}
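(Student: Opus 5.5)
The plan is to expand the variance of $H_{L,A}$ as a double sum of correlations over the interaction terms and then apply Lemma~\ref{lem:supp-high-temperature-clustering} term by term. Write $H_{L,A}=\sum_{Y\Subset A}\Phi_L(Y)$, so that
\begin{equation*}
 \operatorname{Var}_{\beta,H_{L,A}}(H_{L,A})
 =\sum_{Y,Y'\Subset A}\operatorname{Cor}_{\beta,H_{L,A}}\bigl(\Phi_L(Y),\Phi_L(Y')\bigr).
\end{equation*}
Every $Y$ with $\Phi_L(Y)\ne0$ satisfies $|Y|\le k_\Phi$. The exponential factor $e^{(|Y|+|Y'|)/k_\Phi}$ in Eq.~\eqref{eq:supp-high-temperature-clustering} is therefore bounded by $e^2$. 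For $0\le\beta\le\beta_0<\beta_{\mathrm c}$ this gives
\begin{equation*}
 \operatorname{Var}_{\beta,H_{L,A}}(H_{L,A})
 \le C_{\beta_0}e^2\sum_{Y,Y'\Subset A}\|\Phi_L(Y)\|\|\Phi_L(Y')\|
 \sum_{x\in Y}\sum_{y\in Y'}\mathcal G_L(x,y).
\end{equation*}

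Next I exchange the order of summation so that the sum runs over sites $x,y\in A$ first:
\begin{equation*}
 \sum_{x,y\in A}\mathcal G_L(x,y)\Bigl(\sum_{Y\ni x}\|\Phi_L(Y)\|\Bigr)\Bigl(\sum_{Y'\ni y}\|\Phi_L(Y')\|\Bigr).
\end{equation*}
Taking $x=y$ in Eq.~\eqref{eq:supp-finite-pair-strength} and multiplying by the finite-body bound gives $\sum_{Y\ni x}\|\Phi_L(Y)\|\le k_\Phi E_0F(0)$. This is the same single-site bound already used in Eq.~\eqref{eq:supp-general-residual}. It reduces the right-hand side to $(k_\Phi E_0F(0))^2\sum_{x,y\in A}\mathcal G_L(x,y)$.

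The row-sum identity~\eqref{eq:supp-periodized-G-row-sum} gives $\sum_{y\in A}\mathcal G_L(x,y)\le\|G\|_1$. Hence
\begin{equation*}
 \operatorname{Var}_{\beta,H_{L,A}}(H_{L,A})\le C_{\beta_0}e^2k_\Phi^2F(0)^2\|G\|_1\,E_0^2\,|A|.
\end{equation*}
This is the claim with $c_{\mathrm{var}}=C_{\beta_0}e^2k_\Phi^2F(0)^2\|G\|_1$. None of these constants depends on $L$, $A$, or $\beta\in[0,\beta_0]$. In particular, the $E_0$-dependence of $\beta_{\mathrm c}$ does not matter: $C_{\beta_0}$ is fixed once $\beta_0<\beta_{\mathrm c}$ is fixed.

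The main point needing care is that the lemma is applied with $K=H_{L,A}$ and operators supported inside $A$, for an arbitrary, possibly non-convex, subset $A$. The lemma is stated for every nonempty $A\subseteq\Lambda_L$, so this step is covered. The only other subtlety is whether $F(0)$ is absorbed into the constant. Since $F(0)\le\sum_zF(|z|)=1$, it can be dropped, and everything else is bookkeeping.
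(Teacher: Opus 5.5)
Your proposal is correct and matches the paper's proof essentially step for step. Both expand the variance into correlations of the interaction terms and apply Lemma~\ref{lem:supp-high-temperature-clustering} with $e^{(|X|+|Y|)/k_\Phi}\le e^2$. Both then regroup by sites using $\sum_{X\Subset A,\,X\ni x}\|\Phi_L(X)\|\le k_\Phi E_0F(0)$ and close with the row-sum identity~\eqref{eq:supp-periodized-G-row-sum}, and you even arrive at the paper's constant $c_{\mathrm{var}}=C_{\beta_0}e^2k_\Phi^2F(0)^2\|G\|_1$.
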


\begin{proof}
Applying Lemma~\ref{lem:supp-high-temperature-clustering} termwise to
$H_{L,A}=\sum_{X\Subset A}\Phi_L(X)$ gives the required variance bound.
Taking the absolute value of each connected-correlation term, we have
\begin{equation}
 \operatorname{Var}_{\beta,H_{L,A}}(H_{L,A})
 \le
 \sum_{X,Y\Subset A}
 \left|\operatorname{Cor}_{\beta,H_{L,A}}
 (\Phi_L(X),\Phi_L(Y))\right|.
\end{equation}
Set
\begin{equation}
 a_x^{(A)}\ceqq
 \sum_{\substack{X\Subset A\\X\ni x}}\|\Phi_L(X)\|.
\end{equation}
Equations~\eqref{eq:supp-k-body} and
\eqref{eq:supp-finite-pair-strength}, with $x=y$, imply
$a_x^{(A)}\le k_\Phi E_0F(0)$.  Since $|X|,|Y|\le k_\Phi$,
Eq.~\eqref{eq:supp-high-temperature-clustering} yields
\begin{align}
 \operatorname{Var}_{\beta,H_{L,A}}(H_{L,A})
 &\le C_{\beta_0} e^2
 \sum_{x,y\in A}\mathcal G_L(x,y)a_x^{(A)}a_y^{(A)}
 \nonumber\\
 &\le C_{\beta_0} e^2
 \bigl(k_\Phi E_0F(0)\bigr)^2\|G\|_1|A|.
 \label{eq:supp-variance-sum}
\end{align}
This proves Eq.~\eqref{eq:supp-canonical-fluctuation}.
\end{proof}

\subsubsection{Uniform H\"older continuity of the canonical energy}
\label{sec:supp-uniform-holder}

This subsection converts the energy-variance bound 
into the continuity estimate of the canonical energy.  
Near infinite temperature ($s=\ln d$), 
Corollary~\ref{cor:supp-energy-variance} controls the slope of $u_{L,A}(s)$. 
This yields the following proposition.

\begin{prop}[Uniform H\"older continuity of the canonical energy]
\label{prop:supp-uniform-holder}
There is a constant $C<\infty$, independent of $L$ and of the nonempty
subsystem $A\subseteq\Lambda_L$, such that
\begin{equation}
 |u_{L,A}(s)-u_{L,A}(t)|
 \le C E_0\sqrt{|s-t|}
 \qquad(0\le s,t\le\ln d).
 \label{eq:supp-uLA-continuity}
\end{equation}
\end{prop}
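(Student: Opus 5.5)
Fix $L$ and $A$ and set $N=|A|$. I will write $f(\beta)\ceqq\ln\Tr_Ae^{-\beta H_{L,A}}$, $E(\beta)=-f'(\beta)=\Tr(\rho_{\beta,H_{L,A}}H_{L,A})$ and $\sigma(\beta)=S(\rho_{\beta,H_{L,A}})/N$. By Eq.~\eqref{eq:supp-canonical-energy}, $u_{L,A}$ is nondecreasing and convex on $[0,\ln d]$. Its minimum is $u_{L,A}(0)\ge-\|H_{L,A}\|/N$, and its maximum is $u_{L,A}(\ln d)=\Tr H_{L,A}/(Nd^N)$. The residual bound behind Eq.~\eqref{eq:supp-general-residual} gives $\|H_{L,A}\|\le k_\Phi E_0F(0)N$, so the total range is at most $2k_\Phi E_0F(0)=O(E_0)$.

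The plan splits into two regimes, separated by the entropy $s_0\ceqq\sigma(\beta_0)$ for a fixed $\beta_0<\beta_{\mathrm c}$ from Corollary~\ref{cor:supp-energy-variance}.

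\emph{High-temperature regime.} For $0\le\beta\le\beta_0$ the minimizer is Gibbsian, so $u_{L,A}(\sigma(\beta))=E(\beta)/N$. We have $\sigma'(\beta)=-\beta\operatorname{Var}/N$ and $E'(\beta)=-\operatorname{Var}$, with $\operatorname{Var}=\operatorname{Var}_{\beta,H_{L,A}}(H_{L,A})$. Hence the slope $du/ds=1/\beta$ blows up at infinite temperature, which is what forces the square-root modulus.

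I will parametrize instead by $\beta$ directly. First, $\ln d-\sigma(\beta)=\frac1N\int_0^\beta b\operatorname{Var}_b\,db$. Second, $u(\ln d)-u(\sigma(\beta))=\frac1N\int_0^\beta\operatorname{Var}_b\,db$. Cauchy–Schwarz gives
\[
\Bigl(\tfrac1N\int_0^\beta\operatorname{Var}_b\,db\Bigr)^2\le\Bigl(\tfrac1N\int_0^\beta \tfrac{\operatorname{Var}_b}{b}\,db\Bigr)\Bigl(\tfrac1N\int_0^\beta b\operatorname{Var}_b\,db\Bigr).
\]
The first factor diverges logarithmically, so I will avoid it. Instead I use that $\operatorname{Var}_b/N\le c_{\mathrm{var}}E_0^2$ by Corollary~\ref{cor:supp-energy-variance}. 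Then $\Delta u(\beta)\le c_{\mathrm{var}}E_0^2\beta$.

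I also need a matching lower bound $\Delta s(\beta)\gtrsim\beta^2E_0^2$, but that requires a lower bound on the variance, which may fail; for example $H_{L,A}$ could be nearly constant.

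Better is to use convexity directly. By convexity, the slope of $u$ on $[s,t]\subset[\sigma(\beta),\ln d]$ is at most $T(s)=1/\beta(s)$. Combining this with $u(\ln d)-u(s)=\int T\,ds$, define $g(s)=u(\ln d)-u(\ln d-s)$. It is concave-type and satisfies $g(\Delta s)=\Delta u$. Writing $x=\Delta s$, we get $dg/dx=1/\beta$ and $dx/d\beta=\beta\operatorname{Var}/N$, so $d(g^2)/dx=2g/\beta\le 2c_{\mathrm{var}}E_0^2$, using $g\le c_{\mathrm{var}}E_0^2\beta$. This gives $g(x)\le E_0\sqrt{2c_{\mathrm{var}}x}$.

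Concavity of $g$, i.e.\ convexity of $u$, then upgrades this bound at the right endpoint to an arbitrary interval. For $s<t$ in $[s_0,\ln d]$,
\[
u(t)-u(s)\le g(\ln d-s)-g(\ln d-t)\le g(t-s)
\]
by subadditivity of the concave function $g$ with $g(0)=0$. This yields $CE_0\sqrt{t-s}$. Intervals on which the Gibbs family is degenerate, i.e.\ $\operatorname{Var}=0$, are trivial.

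\emph{Low-temperature regime $s\le s_0$.} There the slope of $u$ is at most $T(s_0)=1/\beta_0$, so $u$ is Lipschitz with constant $\beta_0^{-1}$. Since $\beta_0\asymp 1/E_0$ by the form of Eq.~\eqref{eq:supp-high-temperature-condition}, this Lipschitz constant is $O(E_0)$, and $|s-t|\le\ln d$ converts it into $CE_0\sqrt{|s-t|}$.

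Mixed pairs $s<s_0<t$ are handled by the triangle inequality. Entropies below those attained by the Gibbs family are handled by the flat or linear parts of the Legendre transform, which only reduce slopes.

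The main obstacle is making the $d(g^2)/dx$ argument rigorous and uniform. This includes possible nondifferentiability and degeneracies of $\sigma(\beta)$, and checking that $\beta_0E_0$ is a fixed constant independent of $L$ and $A$.
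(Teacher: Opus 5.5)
Your argument is correct and is essentially the paper's proof. You split at $s_{\beta_0}$ and use the uniform variance bound of Corollary~\ref{cor:supp-energy-variance} along the Gibbs curve to get $u_{L,A}(\ln d)-u_{L,A}(\ln d-q)\le E_0\sqrt{2c_{\mathrm{var}}q}$; the paper integrates $-\frac{\d}{\d s}\beta_{L,A}^2\ge 2/(c_{\mathrm{var}}E_0^2)$ into a pointwise temperature bound, while you integrate $\frac{\d}{\d x}g^2\le 2c_{\mathrm{var}}E_0^2$, with the same constant. You then bound the slope by $\beta_0^{-1}=O(E_0)$ below $s_{\beta_0}$ and pass to arbitrary intervals by convexity (subadditivity of $g$), exactly as the paper does.

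The obstacle you flag is not real. If $H_{L,A}$ is not a multiple of the identity, then $\operatorname{Var}_{\beta,H_{L,A}}(H_{L,A})>0$ for every finite $\beta$, so everything is smooth in $\beta$. You can therefore integrate $\frac{\d}{\d\beta}g^2=2g\operatorname{Var}_{\beta,H_{L,A}}(H_{L,A})/|A|\le 2c_{\mathrm{var}}E_0^2\,\frac{\d x}{\d\beta}$ directly, with no reparametrization issues.
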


\begin{proof}
Fix a nonempty subsystem $A\subseteq\Lambda_L$.  If $H_{L,A}$ is
proportional to the identity, then $u_{L,A}$ is constant and 
the claimed continuity bound is immediate.  Otherwise, let $d_{0,L,A}$ be the ground-space
dimension and set
\begin{equation}
 s_{0,L,A}\ceqq\frac{1}{|A|}\ln d_{0,L,A}.
\end{equation}
The canonical energy density is constant on $[0,s_{0,L,A}]$.  On
$(s_{0,L,A},\ln d)$, the canonical inverse temperature
\begin{equation}
 \beta_{L,A}(s)
 \ceqq
 \left[\frac{\d u_{L,A}(s)}{\d s}\right]^{-1}
\end{equation}
is well defined, positive, and strictly decreasing.  Differentiating the
entropy and energy of the finite-volume Gibbs state gives
\begin{equation}
 \frac{\d\beta_{L,A}}{\d s}
 =-\left[
 \beta_{L,A}(s)
 \frac{\operatorname{Var}_{\beta_{L,A}(s),H_{L,A}}(H_{L,A})}{|A|}
 \right]^{-1}.
 \label{eq:supp-beta-derivative}
\end{equation}

Let $s_{\beta_0,L,A}$ denote the entropy density at which
$\beta_{L,A}(s)=\beta_0$.  For
$s\in[s_{\beta_0,L,A},\ln d)$, the variance estimate
in Eq.~\eqref{eq:supp-canonical-fluctuation} and
Eq.~\eqref{eq:supp-beta-derivative} imply
\begin{equation}
 -\frac{\d}{\d s}\beta_{L,A}(s)^2
 =
 2\left[
 \frac{\operatorname{Var}_{\beta_{L,A}(s),H_{L,A}}(H_{L,A})}{|A|}
 \right]^{-1}
 \ge\frac{2}{c_{\mathrm{var}}E_0^2}.
\end{equation}
Since $\beta_{L,A}(s)\to0$ as $s\to\ln d$, integration gives
\begin{equation}
 \beta_{L,A}(s)^{-1}
 \le E_0\sqrt{\frac{c_{\mathrm{var}}}{2(\ln d-s)}}.
 \label{eq:supp-temperature-near-infinite}
\end{equation}
For $s<s_{\beta_0,L,A}$, monotonicity instead gives
$\beta_{L,A}(s)^{-1}\le\beta_0^{-1}$.

Consequently, for every $q\in[0,\ln d]$,
\begin{align}
 u_{L,A}(\ln d)-u_{L,A}(\ln d-q)
 &=
 \int_{\ln d-q}^{\ln d}
 \frac{\d u_{L,A}(s)}{\d s}\d s\nonumber\\
 &\le E_0\sqrt{2c_{\mathrm{var}}q}+\beta_0^{-1}q.
 \label{eq:supp-uniform-continuity}
\end{align}
Finally, convexity
implies that an increment over an interval of fixed length is maximal at the
upper endpoint.  This proves Eq.~\eqref{eq:supp-uLA-continuity}.
\end{proof}

\subsubsection{Thermodynamic limit of $u(s)$}
\label{sec:supp-u-limit}
The existence of the thermodynamic pressure for translation-invariant,
absolutely summable quantum-spin interactions is
standard; see Refs.~\cite{simonStatisticalMechanicsLattice1993,bratteliOperatorAlgebrasQuantum1997,ruelleStatisticalMechanicsRigorous1999}.
We recap the argument retaining the finite-size corrections explicitly.

For a finite region $Q\Subset\Z ^D$, let
\begin{equation}
 H_Q^{\mathrm{op}}\ceqq\sum_{X\Subset Q}\Phi(X)
\end{equation}
be the open-boundary Hamiltonian and define its free-energy density by
\begin{equation}
 f_Q(\beta)\ceqq-
 \frac{1}{\beta|Q|}\ln\Tr_Q e^{-\beta H_Q^{\mathrm{op}}}.
\end{equation}
For $l\le L/2$, the following identities hold:
\begin{align}
 H_{L,Q_{L,l}(z)}
 &=\tau_{L,z}\!\left([H_{\Lambda_l}^{\mathrm{op}}]_L\right),
 \label{eq:supp-torus-open-exact}\\
 u_{L,Q_{L,l}(z)}(s)&=u_{\Lambda_l}^{\mathrm{op}}(s).
 \label{eq:supp-torus-open-u-exact}
\end{align}

For a self-adjoint operator $K$ on $\mathcal H_Q$, write
$f(K;\beta)=-(\beta|Q|)^{-1}\ln\Tr_Qe^{-\beta K}$.  If $K$ and $K'$ act on
$\mathcal H_Q$, 
the Gibbs variational principle gives
\begin{equation}
 |f(K;\beta)-f(K';\beta)|
 \le\frac{\|K-K'\|}{|Q|}.
 \label{eq:supp-free-energy-stability}
\end{equation}
To compare two side lengths $l$ and $l'$, 
choose $N$ divisible by both and
tile $\Lambda_N$ by cubes $B$ of side length $l$.  If
$K_l=\sum_BH_B^{\mathrm{op}}$, then the boundary estimate gives
\begin{equation}
 \|H_{\Lambda_N}^{\mathrm{op}}-K_l\|
 \le C_DE_0N^D\bar{r}_F(l).
\end{equation}
The partition function factorizes as
\begin{equation}
 \Tr_{\Lambda_N}e^{-\beta K_l}
 =\left(\Tr_{\Lambda_l}e^{-\beta H_{\Lambda_l}^{\mathrm{op}}}\right)^{N^D/l^D},
\end{equation}
and hence $f(K_l;\beta)=f_{\Lambda_l}(\beta)$.
Equation~\eqref{eq:supp-free-energy-stability}
therefore gives
\begin{equation}
 |f_{\Lambda_N}(\beta)-f_{\Lambda_l}(\beta)|
 \le C_DE_0\bar{r}_F(l).
\end{equation}
Repeating the argument with $l'$ and using the triangle inequality yields
\begin{equation}
 |f_{\Lambda_l}(\beta)-f_{\Lambda_{l'}}(\beta)|
 \le C_DE_0\bigl[\bar{r}_F(l)+\bar{r}_F(l')\bigr].
\end{equation}
This proves the existence of
$f(\beta)=\lim_{l\to\infty}f_{\Lambda_l}(\beta)$, with
\begin{equation}
 \sup_{\beta>0}|f_{\Lambda_l}(\beta)-f(\beta)|
 \le C_DE_0\bar{r}_F(l).
 \label{eq:supp-free-energy-limit}
\end{equation}
Since
\begin{equation}
 u_{\Lambda_l}^{\mathrm{op}}(s)
 =\sup_{\beta>0}\{f_{\Lambda_l}(\beta)+\beta^{-1}s\},
\end{equation}
taking the supremum does not increase the uniform error.  Therefore
\begin{equation}
 \sup_{s\in[0,\ln d]}|u_{\Lambda_l}^{\mathrm{op}}(s)-u(s)|
 \le C_DE_0\bar{r}_F(l),
 \label{eq:supp-u-limit-rate}
\end{equation}
where $u$ is the limiting Legendre transform of $f$.  
Equations~\eqref{eq:supp-u-limit-rate} and
\eqref{eq:supp-torus-open-u-exact} directly imply
\begin{equation}
 \sup_{s\in[0,\ln d]}
 |u_{L,Q_{L,l}(z)}(s)-u(s)|
 \le C_DE_0\bar{r}_F(l),
 \label{eq:supp-torus-cube-u-limit}
\end{equation}
which is Eq.~\eqref{eq:supp-u-finite-size-used} and also proves the existence
claimed in Eq.~\eqref{eq:supp-u-limit}.

The same estimate applies to every union of blocks from a disjoint
$l$-cube family.  Indeed, if
$\hat{A}=\bigsqcup_{A\in\Acal_{\hat{A}}}A$, then
\begin{equation}
 \left\|H_{L,\hat{A}}-
 \sum_{A\in\Acal_{\hat{A}}}H_{L,A}\right\|
 \le\sum_{A\in\Acal_{\hat{A}}}b_{\Phi,L}(A).
 \label{eq:supp-union-H-comparison}
\end{equation}
The canonical state of the Hamiltonian on the right factorizes at each
inverse temperature, so its canonical energy density is
$u_{L,Q_{L,l}(0)}(s)$.  Hence
\begin{equation}
 \sup_{s\in[0,\ln d]}
 |u_{L,\hat{A}}(s)-u(s)|
 \le C_DE_0\bar{r}_F(l).
 \label{eq:supp-union-u-limit}
\end{equation}
This is the finite-size estimate used in
Eq.~\eqref{eq:supp-coarse-single-level}.  Monotonicity and convexity pass to
the limit.  The continuity estimate in Eq.~\eqref{eq:supp-uLA-continuity}
also passes to $u$, which justifies all $O(\sqrt{\delta s})$ terms above.

\end{document}